\newcommand{\mc}{\mathcal}
\newtheorem{prop}{Proposition}
\newtheorem{corollary}{Corollary}
\newcommand{\squishlist}{
 \begin{list}{$\bullet$}
  { \setlength{\itemsep}{0pt}
     \setlength{\parsep}{2pt}
     \setlength{\topsep}{2pt}
     \setlength{\partopsep}{0pt}
     \setlength{\leftmargin}{1em}
     \setlength{\labelwidth}{1em}
     \setlength{\labelsep}{0.5em} } }
     \newcommand{\squishend}{
  \end{list}  }
\title{Learning Individual Policies in Large Multi-agent Systems through Local Variance Minimization}
\author{
    Tanvi Verma\textsuperscript{\rm 1},
    Pradeep Varakantham\textsuperscript{\rm 2}
}
\begin{document}

\maketitle

\begin{abstract} 
In multi-agent systems with large number of agents, typically the contribution of each agent to the value of other agents is minimal (e.g., aggregation systems such as Uber, Deliveroo). In this paper, we consider such multi-agent systems where each agent is self-interested and takes a sequence of decisions and represent them as a Stochastic Non-atomic Congestion Game (SNCG).  We derive key properties for equilibrium solutions in SNCG model with non-atomic and also nearly non-atomic agents. With those key equilibrium properties, we provide a novel Multi-Agent Reinforcement Learning (MARL) mechanism that minimizes variance across values of agents in the same state. To demonstrate the utility of this new mechanism, we provide detailed results on a real-world taxi dataset and also a generic simulator for aggregation systems. We show that our approach reduces the variance in revenues earned by taxi drivers, while still providing higher joint revenues than leading approaches. 

\end{abstract}


\section{Introduction}

In this paper, we are motivated by large scale multi-agent systems such as car matching systems (Uber, Lyft, Deliveroo etc.), food delivery systems (Deliveroo, Ubereats, Foodpanda, DoorDash etc.), grocery delivery systems (AmazonFresh, Deliv, RedMart etc) and traffic routing guidance. The individual agents need to make sequential decisions (e.g.,on where to position themselves) in the presence of uncertainty.  The objective in aggregation systems is typically to maximize overall revenue and this can result in individual agent revenues getting sacrificed. Towards ensuring fair outcomes for individual agents, our focus in this paper is on enabling computation of approximate equilibrium solutions. 

Competitive Multi-Agent Reinforcement Learning (MARL)~\cite{littman1994markov} with an objective of computing equilibrium is an ideal model for representing the underlying decision problem.  There has been a significant amount of research on learning equilibrium policies in MARL problems \cite{littman1994markov,watkins1992q,hu1998multiagent,littman2001friend}. However, these approaches typically can scale to problems with few agents.  Recently, there have been Deep Learning based MARL methods \cite{yu2021surprising,heinrich2016deep,pmlrv80yang18d,lowe2017multi} that can scale to a large number of agents on account of using decentralized learning. While  decentralized learning is scalable, such approaches consider all other agents in aggregate or as part of the environment and this can have an impact on overall effectiveness (as demonstrated in our experimental results). 


To that end, we propose a Stochastic Non-atomic Congestion Game (SNCG) model, which combines Stochastic Game~\cite{shapley1953stochastic} and Non-atomic Congestion Game (NCG) models \cite{roughgarden2002bad,krichene2015online}. SNCG helps exploit anonymity in interactions and minimal contribution of individual agents, which are key facets of aggregation systems of interest in this paper. More importantly, we use a key theoretical property of the SNCG model to design a novel model free multi-agent reinforcement learning approach that is suitable for both non-atomic and nearly non-atomic agents. Specifically, we prove that at equilibrium  values of non-atomic agents, which are all in the same local state are equal. Our model free MARL approach, referred to as LVMQ (Local Variance Minimizing Q-learning) reduces variance in ``best response"  agent values in local states to move joint solutions towards equilibrium solutions. 



We demonstrate the utility of our new approach for individual drivers on a large dataset from a major Asian city by comparing against leading decentralized learning approaches including Neural Fictitious Self Play (NFSP)~\cite{heinrich2016deep}, Mean Field Q-learning (MFQ) \cite{pmlrv80yang18d} and Multi-Agent Actor Critic (MAAC) \cite{lowe2017multi}.
In addition, we also provide results on a general purpose simulator for aggregation systems. 

\section{Related work}
The most relevant research is on computing equilibrium policies in MARL problems, which is represented as learning in stochastic games~\cite{shapley1953stochastic}. 
Nash-Q learning \cite{hu1998multiagent} is a popular algorithm that extends the classic single agent Q-learning \cite{watkins1992q} to general sum stochastic games. 
 In fictitious self play (FSP) \cite{heinrich2015fictitious} agents learn best response through self play. FSP is a learning framework that implements fictitious play \cite{brown1951iterative} in a sample-based fashion. There are a few works which assume existence of a potential function and either need to know the potential function a priori \cite{marden2012state} or estimate potential function's value \cite{mguni2020stochastic}. Unfortunately, all these algorithms are generally suited for a few agents and do not scale if the number of agents is very large, which is the case for the problems of interest in this paper.   

Recently, some deep learning based algorithms have been proposed to learn approximate Nash equilibrium in a decentralized manner. NFSP \cite{heinrich2016deep} combines FSP with a neural network to provide a decentralized learning approach. Due to decentralization, NFSP is scalable and can work on problems with many agents. \textit{Centralized learning decentralized execution} (CTDE) algorithms learn a centralized joint action value function during the training phase, however individual agents optimize their actions based on their local observation. MAAC \cite{lowe2017multi} is a CTDE algorithm which is a best response based learning method and does not focus on learning equilibrium policy. M3DDPG \cite{li2019robust} learn policies against altering adversarial policies by optimizing a minimax objective. SePS \cite{christianos2021scaling} is another CTDE algorithm which focuses on parameter sharing to increase scalabilty of MARL. However these methods do not consider infinitesimal contribution of individual agents. As we demonstrate in our experimental results, our approach that benefits from exploiting key properties of SNCG is able to outperform NFSP and MAAC with respect to the quality of approximate equilibrium solutions on multiple benchmark problem domains from literature.
There are some CTDE algorithms \cite{sunehag2018value,rashid2018qmix,yu2021surprising} which focus on cooperative settings and not suitable for non-cooperative setting of our interest.

Treating large multi-agent systems as mean field game (MFG) \cite{huang2003individual} is a popular approach to solve large scale MARL problems where each agent learns to play against mean field distribution of all other agents. For MFQ \cite{pmlrv80yang18d}, individual agents learn Q-values of its interaction with average action of its neighbour agents. \cite{huttenrauch2019deep} used mean feature embeddings as state representations to encode the current distribution of the agents. \cite{subramanian2020multi} extended MFQ to multiple types of players. In the context of finite MFGs, \cite{cui2021approximately} proposed approximate solution approaches using entropy regularization and Boltzmann policies. Recently \cite{subramanian2022decentralized} proposed decentralized MFG where agents model the mean field during the training process. This thread of research is closely related to our work and we show that by utilizing extra information shared by the central agent, LVMQ outperforms MFQ.


We have proposed to minimize variance in the values of agents present in same local state to learn equilibrium policy. Some works in single agent reinforcement learning have focused on direct and indirect \cite{sherstan2018comparing,jain2021variance, tamar2016learning} approaches to compute variance in the \textit{return} of a single agent. This is fundamentally different from LVMQ, where variance is across \textit{values} of multiple agents present in a local state (instead of variance in value distribution for a single agent in those approaches).

\begin{figure}
  \centering
    \subfloat[Routing network \label{routing}]{\includegraphics[scale=0.2]{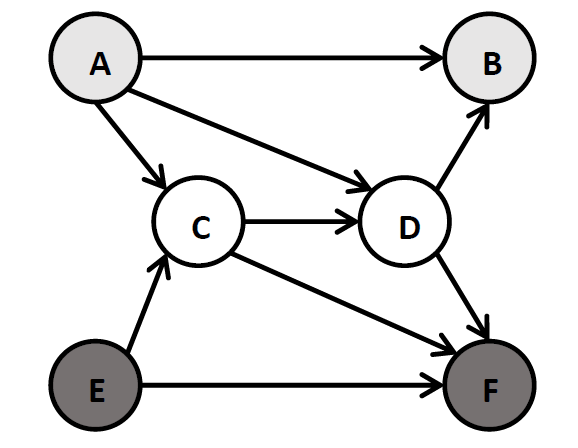}} 
     \subfloat[Grid world \label{grid-world}]{\includegraphics[scale=0.3]{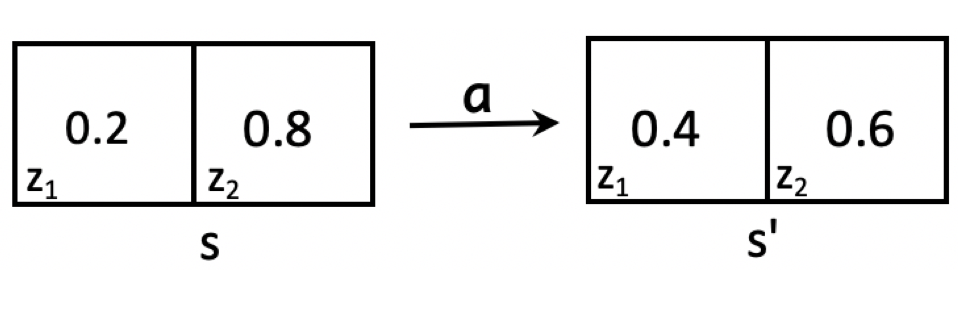}} 
  \caption{(a) Example of a routing network. (b) State transition in SNCG for a grid-world domain}
\end{figure}

\section{Background: Non-atomic Congestion Game (NCG)} \label{bg}

NCG has either been used to model selfish routing~\cite{roughgarden2002bad,roughgarden2007routing,fotakis2009structure} or resource sharing~\cite{chau2003price,krichene2015online,bilancini2016strict} problems. 
Here we present a brief overview of NCG~\cite{krichene2015online} from the perspective of resource sharing problem as that is of relevance to domains of interest, where taxis have to share demand in different zones. 

In NCG, a finite set of resources $\mc{L}$ are shared by a set of players $\mc{N}$. To capture the infinitesimal contribution of each agent, the set $\mc{N}$ is endowed with a measure space: $(\mc{N}, \mc{M}, m)$. $\mc{M}$ is a $\sigma$-algebra of measurable subsets, $m$ is a finite Lebesgue measure and is interpreted as the mass of the agents. This measure is non-atomic, i.e., for a single agent set $\{i\}$, $m(\{i\}) = 0$. The set $\mc{N}$ \big(with $m(\mc{N})$=1, i.e., mass of all agents is 1\big) is partitioned into $Z$  populations, $\mc{N} = \mc{N}_1 \cup...\cup \mc{N}_{Z}$. Each population type $z$ possesses a set of strategies $\mc{A}_z$, and each strategy corresponds to a subset of the resources. Each agent selects a strategy, which leads to a joint strategy distribution, $\bm{a} = ((f^a_z)_{a \in \mc{A}_z})_{z \in \mc{Z}} \text{ such that } \sum_{a \in \mc{A}_z} f^a_z = m(\mc{N}_z) \forall z$. Here $f^a_z$ is the total mass of the agents from population $z$ who choose strategy $a$. The total consumption of a resource $l \in \mc{L}$ in a strategy distribution $\bm{a}$ is given by: $\phi^l(\bm{a}) = \sum_{z=1}^{Z} \sum_{a \in \mc{A}_z | l \in a} f^a_z$

The cost of using a resource $l \in \mc{L}$ for strategy $\bm{a}$ is: $c_l(\phi^l(\bm{a}))$ where the function $c_l(.)$ represents cost of congestion and is assumed to be a non-decreasing continuous function. The cost experienced by an agent of type $z$ which selects strategy $a \in \mc{A}_z$ is given by: ${\mc C}^a_z(\bm{a}) = \sum_{l \in a} c_l(\phi^l(\bm{a}))$. A strategy $\bm{a}$ is a Nash equilibrium if:
$$\forall z , \forall a, a' \in \mc{A}_z: \textbf{ if } f^a_z > 0, \textbf{ then } {\mc C}^{a'}_{z}(\bm{a}) \ge {\mc C}^a_z(\bm{a})$$ Intuitively, it implies that the cost for any other strategy, $a'$ will be greater than or equal to the cost of strategy, $a$. It also implies that for a population $\mc{N}_z$, all the strategies with non-zero mass will have equal costs.

Packet routing problem in Figure \ref{routing} is an example of an NCG. Two agent populations ($Z = 2$) $\mc{N}_1$ and $\mc{N}_2$ of mass 0.5 each share the network. Each edge is treated as a resource. Agents from $\mc{N}_1$ send packets from node $A$ to node $B$, and the the agents from $\mc{N}_2$ send from node $E$ to node $F$.  Paths (strategies) $AB, ACDB, ADB$ are available for population $\mc{N}_1$ whereas paths $EF, ECDF, ECF$ are available for population $\mc{N}_2$. A joint strategy is said to be an equilibrium strategy if for population type $z$, costs on paths available to them with non-zero mass are equal. Refer to the equilibrium policy given in Table \ref{eq-policy}, the costs on paths $ACDB$ and $ADB$ (paths with non-zero mass) for population $\mc{N}_1$ will be equal at equilibrium.
\section{Stochastic Non-atomic Congestion Games (SNCG)}
\label{sncg}
In this section we propose SNCG model, which is a combination of NCG and Stochastic Games. NCG is primarily for single stage settings, we extend it to multi-stage settings in SNCG. Formally, SNCG is the tuple: $\big< \mc{N, Z, S, A, T, R} \big>$
\squishlist
\item[$\mc{N}$:] set of agents with properties similar to the ones in NCG described in Section \ref{bg}.
\item[$\mc{Z}$:]  set of local states for individual agents (e.g., location zone of a taxi).
\item[$\mc{S}$:] set of global states (e.g., distributions of taxis in the city).  At any given time, $\mc{N}$ is divided into $|\mc{Z}|$ disjoint sets, $\mc{N} = \mc{N}^s_1 \cup ... \cup \mc{N}^s_{|\mc{Z}|}$, where $\mc{N}^s_z$ is the set of agents present in local state $z$ in global state $s$ and $m(\mc{N}^s_z)$ is the mass of agents in $\mc{N}^s_z$. The distribution of mass of agents is considered as the global state, i.e., $s=<m(\mc{N}^s_1), m(\mc{N}^s_2),...,m(\mc{N}^s_{|\mc{Z}|})>, \text{ with } \sum_{z=1}^{|\mc{Z}|} m(\mc{N}^s_z) = 1 , \forall s \in \mc{S}$. 
The global state is continuous and total mass of agents in a global state is 1. 
\item[$\mc{A}$:]  is the set of actions, where $\mc{A}_z$ represents  the set of actions (e.g., locations to move to) available to individual agents in the local state $z$.  $\mc{A}= \{\mc{A}_z \}_{z \in \mc{Z}}$. Let $act(i)$ provides the action selected by agent $i$. We define $f^{a}_{z}(s) $ as the total mass of agents in $\mc{N}^s_z$ selecting action $a$ in state $s$, i.e. $\sum_{a \in \mc{A}_z} f^{a}_{z}(s)  = m(\mc{N}_z^s)$. If the agents are playing deterministic policies, $f^{a}_{z}(s)$ is given by $f^{a}_{z}(s) = \int_{i \in \mc{N}^s_z} \mathbbm{1}_{(act(i)=a)} dm(i)$. 
Similar to NCG, joint action is the distribution of masses of agents selecting available actions in each local state, i.e. for state $s$, $\bm{a} = ((f^a_z)_{a \in \mc{A}_z})_{z \in \mc{Z}} \text{ such that } \sum_{a \in \mc{A}_z} f^a_z = m(\mc{N}^s_z) \forall z$.
 
\item[$\mc{R}$]: is the reward function\footnote{``cost" is often used in context of NCG. To be consistent with MARL literature we use the term ``reward".}, which can be assumed as negative of cost function of NCGs.  The total mass of agents selecting action $a$ for a joint action $\bm{a}$ in state $s$ is given by $\phi^{a}(s, \bm{a}) =\sum_{z=1}^{|\mc{Z}|} f^{a}_{z}(s)$. $\phi(s,\bm{a})$ refers to the vector of mass of agents selecting different actions. 
Similar to the cost functions in NCG, the reward function (-ve of cost) is assumed to be a non-increasing continuous function. 

Depending on whether the congestion being represented is in source state of agent (node) or on the edge between two states, we have two possible reward cases of relevance in aggregation systems:
\squishlist
\item[\textbf{R1}]: Reward dependent only on joint state, local state and joint action, i.e., $R_z(s, \phi(s,\mathbf{a}))$. An example of this type of reward function would be referring to congestion at attractions in theme parks and individuals (agents) are minimizing their overall wait times.
\item[\textbf{R2}]: Reward dependent on joint state, local state, joint action and local action, i.e., $R_z(s, \phi^a(s, \mathbf{a}))$. This is useful in representing congestion related to traffic on roads. An example would be congestion on roads, where individual drivers are trying to reach their destination in the least time. 
\squishend

\item[$\mathcal{T}$]: $\mathcal{T}(s'|s,\bm{a})$ is the transitional probability of global states\footnote{Considering local transitions is simpler, as global transition can be computed from local transitions.} given joint actions. Similar to reward function, transition of global state is also dependent on the total mass of agents selecting different actions in state $s$. 

\squishend
In SNCG, each state is an instance of NCG where agents in a local state select actions available to them and the environment moves to the next state (another instance of NCG). Also, as the same set of actions are available to all the agents present in a given local state, agents present in the local state can be considered as agents of same population type. Figure \ref{grid-world} represents state transition from global state $s$ to $s'$ for a simple grid-world with 2 zones $z_1$ and $z_2$ and total mass of agents is 1. The number in the grid represents mass of agents present in that zone, i.e. $s=<0.2,0.8>$ and $s' = <0.4, 0.6>$. Each zone has two feasible actions, $a^{st}$ = stay in the current zone and $a^{mv}$ = move to the next zone, i.e. $\mc{A}_1, \mc{A}_2 = \{a^{st}, a^{mv}\}$. From NCG point of view, state $s$ is an NCG with 0.2 mass of agents of population type $z_1$ and 0.8 mass of agents of population type $z_2$. Similarly, $s'$ can be viewed as an NCG with 0.4 mass of agents of type $z_1$ and 0.6 mass of agents of type $z_2$.  Suppose 0.1 mass of agents in $z_1$ chose to stay and remaining chose to move to $z_2$. Similarly, 0.48 mass of agents in $z_2$ selected to stay whereas remaining agents selected action $a^{mv}$ and due to transition uncertainty, the environment moves to $s'$. The joint action (which is continuous) of the agents is given by $\bm{a} = ((0.1, 0.1),(0.48, 0.32))$. 

The policy of agent $i$ is denoted by $\pi_i$. We observe  that given a global state $s$, an agent will play different policies based on its local state $z$ as the available actions for local states are different. Hence, $\pi_i$ can be represented as $
\pi_i = (\pi_{iz}(s))_{s \in \mc{S}, z \in \mc{Z}}  \text{  such that} \sum_{a \in \mc{A}_z} \pi_{iz}(a|s) = 1$. 
We define $\Pi_z$ as the set of policies available to an agent in local state $z$, hence, $\pi_{iz}(s) \in \Pi_z  \forall i \in \mc{N}^s_z, \forall s \in \mc{S}$. $\bm{\pi} = (\pi_i)_{(i \in \mc{N})}$ is the joint policy of all the agents. 


Let $\gamma$ be the discount factor. 
The value of agent $j$ for being in local state $z$ given the global state is $s$ and other agents are following policy $\bm{\pi}_{-j}$ is given by (for the \textbf{R2} reward case):
{\small
 \begin{align}
 v_{jz}(s, \pi_{jz}, &\bm{\pi}_{-j}) = \mc{R}_z(s, \phi^{\pi_{jz}(s)}(s, \bm{a})) \nonumber \\
  &+ \gamma \int_{s'}  \mc{T}(s'| s, \bm{a})  v_{jz'}(s', \pi_{jz'}, \bm{\pi}_{-j})ds' \label{eq:val}
\end{align}
}
The goal is to compute an equilibrium joint strategy, where no individual agent has an incentive to unilaterally deviate.

\section{Properties of equilibrium in SNCG} \label{eq-sncg}

In this section, we show that values of agents in the same local state have either equal or close to equal values at equilibrium. We begin with the case of non-atomic agents and then move to the case with large number of agents and where mass of an agent is non-zero. Due to space constraints, we only provide proof sketches and the detailed proofs are provided in the appendix. 

\subsubsection{Non-Atomic Agents:}
In case of non-atomic agents, we first prove that values of other agents do not change if only one agent changes its policy (Proposition~\ref{agent-value}). This property is then later used to prove that values of agents present in a local state are equal at equilibrium (Proposition~\ref{equal-value}). 

\begin{prop} \label{agent-value}
Values of other agents do not change if agent $i$ alone changes its policy from $\pi_i$ to $\pi'_i$, i.e., for any agent $j$ in any local state $z$:
$ v_{jz} (s, \pi_{jz}, \bm{\pi}_{-j} ) = v_{jz} (s, \pi_{jz}, \bm{\pi}'_{-j} )
\text{ where }  \bm{\pi}_{-j} = \big(\pi_i, (\pi_k)_{k \in {\mc{N} \setminus \{i, j\}}}\big) $ and $\bm{\pi}'_{-j} = \big(\pi'_i, (\pi_k)_{k \in {\mc{N} \setminus \{i, j\}}}\big)$
\end{prop}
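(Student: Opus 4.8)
The plan is to exploit the single fact that distinguishes the non-atomic setting: the value $v_{jz}$ in \eqref{eq:val} depends on the joint policy \emph{only} through the aggregate mass distribution $\bm{a} = ((f^a_z(s))_{a})_{z}$, which enters the reward $\mc{R}_z(s,\phi^{\pi_{jz}(s)}(s,\bm{a}))$ and the transition kernel $\mc{T}(s'\mid s,\bm{a})$. Nowhere does the identity of any individual agent (other than $j$'s own local state and action, which are held fixed on both sides) appear. Hence it suffices to show that replacing $\pi_i$ by $\pi'_i$ leaves every $f^a_z(s)$ --- and therefore $\phi^a(s,\bm{a})$ and the whole vector $\bm{a}$ --- unchanged, at every global state $s$, local state $z$, and action $a$.

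This invariance is exactly where non-atomicity is used. First I note that $i \neq j$: the definition of $\bm{\pi}_{-j}$ lists $\pi_i$ explicitly, so the deviating agent is distinct from the tracked agent. Writing the mass of agents in $\mc{N}^s_z$ that pick action $a$ as the integral $f^a_z(s) = \int_{\mc{N}^s_z} \pi_{kz}(a\mid s)\, dm(k)$ (the deterministic-policy form $\int_{\mc{N}^s_z} \mathbbm{1}_{(act(k)=a)}\,dm(k)$ being the special case), changing agent $i$'s policy alters the integrand on the singleton $\{i\}$ only. Since the measure is non-atomic, $m(\{i\}) = 0$, so modifying the integrand on a set of measure zero leaves the integral unchanged. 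Thus $f^a_z(s)$ is identical under $\bm{\pi}_{-j}$ and $\bm{\pi}'_{-j}$, and consequently $\phi^a(s,\bm{a}) = \sum_z f^a_z(s)$ and $\bm{a}$ itself are invariant for all $s,z,a$.

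Having fixed the driving quantities, I would propagate the invariance through the recursion. The cleanest route is a fixed-point argument: for a fixed joint policy, \eqref{eq:val} defines $v_{jz}$ as the fixed point of a Bellman operator whose immediate-reward term $\mc{R}_z(s,\phi^{\pi_{jz}(s)}(s,\bm{a}))$ and integral kernel $\gamma\,\mc{T}(s'\mid s,\bm{a})$ depend on the others' policies purely through $\bm{a}$. By the previous step these coincide for $\bm{\pi}_{-j}$ and $\bm{\pi}'_{-j}$, so the two operators are literally the same map on the space of bounded value functions; with $\gamma < 1$ and bounded rewards this map is a contraction with a unique fixed point, and therefore $v_{jz}(s,\pi_{jz},\bm{\pi}_{-j}) = v_{jz}(s,\pi_{jz},\bm{\pi}'_{-j})$ for every $s$ and $z$. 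An equivalent, more elementary finish is to unroll \eqref{eq:val} and match the two expansions term by term, since the stage rewards and transition probabilities agree at every horizon.

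I expect the only real subtlety --- the main obstacle --- to be the measure-theoretic bookkeeping rather than any conceptual difficulty. One must phrase the mass $f^a_z(s)$ as an integral against the non-atomic measure $m$ so that the ``measure-zero perturbation'' argument is rigorous for stochastic (not merely deterministic) policies, and one must ensure the continuous-global-state recursion in \eqref{eq:val} is well posed (rewards bounded, $\mc{T}(\cdot\mid s,\bm{a})$ a genuine probability measure over $s'$) so that the contraction/fixed-point step is justified. Once those technical hypotheses are in place, the result follows immediately from $m(\{i\}) = 0$.
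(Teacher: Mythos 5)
Your proposal is correct and follows essentially the same route as the paper's proof: both arguments observe that $v_{jz}$ depends on the other agents' policies only through the aggregate masses $f^a_z(s)$, and that these are integrals against the non-atomic measure $m$, so perturbing agent $i$'s policy changes the integrand only on the null set $\{i\}$ and hence changes nothing. Your treatment is in fact slightly more careful than the paper's --- you handle stochastic policies directly by integrating $\pi_{kz}(a\mid s)$ rather than relegating them to a footnote, and you close the recursion with an explicit contraction/fixed-point step where the paper simply asserts that the reward, transition, and future-value components are unchanged --- but these are refinements of the same argument, not a different one.
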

\noindent \textbf{Proof Sketch.}
As can be observed in Equation~\ref{eq:val}, change of policy for agent $i$ can have impact on agent $j$'s value, $v_{jz}(s, \pi_{jz}, \bm{\pi}_{-j})$, due to one common summary factor, which is the mass of agents taking a certain action ${a}$ in state $s$ with zone $z$, $\tilde{f}_{{z}}^{a}(s)$.

Since $f$ is primarily mass of agents, which is a Lebesgue measure and using the \textit{countable additivity} property of Lebesgue measure \cite{bogachev2007measure,hartman2014theory}, we have:
{\small  \begin{align*}
 \tilde{f}^{a}_{z}(s)  &= \int_{k \in {\mc{N}^s_z}} \mathbbm{1}_{(act(k)=a)} dm(k) -   \int_{k \in \{i\}} \mathbbm{1}_{(act(k)=a)} dm(k)
 \end{align*}
 }

\noindent Integral at a point in continuous space is 0 and mass measure is non-atomic, so $\{i\}$ is a null set and $m(\{i\})=0$. $\hfill \qedsymbol$\\

Based on Proposition~\ref{agent-value}, we now show that at Nash Equilibrium for SNCG, values of agents that are in same local state are equal.  A joint policy $\bm{\pi}$ is a Nash equilibrium if for all $z \in \mc{Z}$ and for all $i \in \mc{N}^s_z$, there is no incentive for anyone to deviate unilaterally, i.e. $
v_{iz}(s, \pi_{iz}, \bm{\pi}_{-i}) \ge v_{iz}(s, \pi'_{iz}, \bm{\pi}_{-i}) \forall s \in \mc{S}, \forall i \in \mc{N}^s_z, \forall z \in \mc{Z}, \forall \pi_{iz}, \pi'_{iz} \in \Pi_z \label{equilibrium-local}
$

\begin{prop} \label{equal-value}
Values of agents present in a local state are equal at equilibrium (denoted by *), i.e.,
{\small \begin{align*}
v_{iz}(s, \pi^*_{iz}, \bm{\pi}^*_{-i}) = v_{jz}(s, \pi^*_{jz}, \bm{\pi}^*_{-j}), \forall s \in \mc{S}, \forall i,j \in \mc{N}^s_z, \forall z \in \mc{Z}
\end{align*}}
\end{prop}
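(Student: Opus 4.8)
The plan is to collapse the multi-agent equilibrium condition onto a single common single-agent decision problem, exploiting exactly the non-atomicity already used in Proposition~\ref{agent-value}. The central observation is structural: in Equation~\ref{eq:val} the other agents influence $v_{iz}$ only through the aggregate consumption $\phi(s,\bm{a})$ and the transition kernel $\mc{T}(\cdot\mid s,\bm{a})$, both of which are functions of the joint action $\bm{a}$ alone and never of the identity of any individual agent.

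First I would fix the equilibrium joint policy $\bm{\pi}^*$, let $\bm{a}^*$ be the aggregate joint action it induces, and define a single common value function $V_z(s,\pi)$: the value earned by a representative measure-zero agent that is currently in local state $z$ and global state $s$, plays its own policy $\pi$, and faces the aggregate $\bm{a}^*$ produced by the remaining population under $\bm{\pi}^*$. Using the same countable-additivity / null-set argument as in Proposition~\ref{agent-value}, removing any single agent $i\in\mc{N}^s_z$ from the population leaves $\phi(s,\bm{a}^*)$, and hence $\mc{T}(\cdot\mid s,\bm{a}^*)$, unchanged, so the instantaneous reward $\mc{R}_z(s,\phi^{\pi(s)}(s,\bm{a}^*))$ and the one-step transition in Equation~\ref{eq:val} are identical for every such $i$. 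I would then conclude $v_{iz}(s,\pi,\bm{\pi}^*_{-i}) = V_z(s,\pi)$ for all $i \in \mc{N}^s_z$ and all own-policies $\pi$; that is, the map from own-policy to value is the same regardless of which agent we single out.

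With this common value function established, the equilibrium condition finishes the argument quickly. The Nash condition states that each agent's policy is a best response at every $(s,z)$, so $\pi^*_i$ is optimal for the induced single-agent decision problem and attains $v_{iz}(s,\pi^*_{iz},\bm{\pi}^*_{-i}) = \max_{\pi}V_z(s,\pi) =: V^*_z(s)$. Since the right-hand side is the same optimal value for every agent sharing $(s,z)$, I would conclude that for any $i,j\in\mc{N}^s_z$ we have $v_{iz}(s,\pi^*_{iz},\bm{\pi}^*_{-i}) = V^*_z(s) = v_{jz}(s,\pi^*_{jz},\bm{\pi}^*_{-j})$, which is exactly the claim.

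I expect the main obstacle to be rigorously justifying that $V_z(s,\pi)$ is well-defined, i.e. that the identity-independence established for the instantaneous reward and the one-step transition genuinely propagates through the infinite-horizon recursion of Equation~\ref{eq:val}. The reward and transition coincide across agents one step at a time, whereas the value is an infinite discounted sum whose continuation terms $v_{iz'}(s',\cdot,\cdot)$ must themselves be shown identity-independent. I would close this gap either by induction over the horizon or, more cleanly, by noting that for the fixed environment $\bm{a}^*$ the recursion is a $\gamma$-contraction with $\gamma<1$ and therefore has a unique fixed point; since the Bellman operator is literally the same for every agent in $(s,z)$, its fixed point $V_z$ must be the same as well.
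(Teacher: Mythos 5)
Your proposal is correct and takes essentially the same route as the paper: both rest on the non-atomicity argument of Proposition~\ref{agent-value} (a single agent's presence or deviation leaves the aggregate $\phi(s,\bm{a})$ and hence reward and transition unchanged) together with anonymity, to conclude that the value in $(s,z)$ depends only on the agent's own policy, and then apply the Nash best-response condition over the common policy set $\Pi_z$. The only cosmetic difference is in the final step --- you conclude that every agent attains the common maximum $\max_{\pi} V_z(s,\pi)$, whereas the paper pairs the two crossed inequalities $v_{iz}(s,\pi^*_{iz},\bm{\pi}^*)\ge v_{iz}(s,\pi^*_{jz},\bm{\pi}^*)$ and $v_{jz}(s,\pi^*_{jz},\bm{\pi}^*)\ge v_{jz}(s,\pi^*_{iz},\bm{\pi}^*)$ with interchangeability of $i$ and $j$ --- and your explicit care about propagating identity-independence through the infinite-horizon recursion (via induction or a contraction fixed-point argument) addresses a step the paper asserts without detail.
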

\textbf{Proof Sketch.} 
In the proof of Proposition~\ref{agent-value}, we showed that adding or subtracting one agent from a local state does not change other agent's values, as contribution of one agent is infinitesimal. Combining this with the equilibrium condition employed in NashQ learning~\cite{hu2003nash} with local states, we have: 
\begin{align}
 &v_{iz}(s, \pi^{*}_{iz}, \bm{\pi^*}) \geq v_{iz}(s, \pi^*_{jz}, \bm{\pi^*}) \quad \text{and} \nonumber \\
 &v_{jz}(s, \pi^{*}_{jz}, \bm{\pi^*}) \geq v_{jz}(s, \pi^*_{iz}, \bm{\pi^*}) \nonumber \quad \quad \quad \quad \hfill \qedsymbol
 \end{align}

These results can also be extended to problems with multiple types of agents. 

\begin{corollary}
For problems with multiple types of agents, values of same type of agents are equal in a local state at equilibrium for non-atomic case. $\qedsymbol$
\end{corollary}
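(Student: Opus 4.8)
The plan is to reduce the multi-type statement to the single-type argument already established in Proposition~\ref{equal-value}, by restricting attention to agents that share a type within a fixed local state. First I would observe that Proposition~\ref{agent-value} is entirely type-agnostic: its proof relies only on a single agent being a null set under the non-atomic measure $m$, so that redirecting or removing one agent leaves every aggregate $\tilde{f}^a_z(s)$ --- and hence every other agent's value --- unchanged, irrespective of which type that agent belongs to. This makes the infinitesimal-contribution machinery available in the multi-type setting verbatim.

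Next I would fix a local state $z$, a global state $s$, and two agents $i, j \in \mc{N}^s_z$ of the \emph{same} type, so that they share the feasible policy set $\Pi_z$. The crucial step is an exchangeability claim: $v_{iz}(s, \pi^*_{jz}, \bm{\pi}^*_{-i}) = v_{jz}(s, \pi^*_{jz}, \bm{\pi}^*_{-j})$. To see this, note that when agent $i$ adopts $j$'s equilibrium policy $\pi^*_{jz}$, the resulting joint action differs from the equilibrium joint action only through an agent of mass zero, so by Proposition~\ref{agent-value} the aggregate masses $\phi(s, \cdot)$, the reward $\mc{R}_z$, and the transition $\mc{T}$ all coincide with those faced by agent $j$ playing the same policy; since $i$ and $j$ are of the same type and play identical policies in an identical environment, their values agree.

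With exchangeability in hand, I would invoke the Nash equilibrium condition in both directions. Because $i$ and $j$ are of the same type, $\pi^*_{jz} \in \Pi_z$ is a feasible deviation for $i$ and $\pi^*_{iz} \in \Pi_z$ is feasible for $j$, so no-unilateral-deviation gives $v_{iz}(s, \pi^*_{iz}, \bm{\pi}^*_{-i}) \ge v_{iz}(s, \pi^*_{jz}, \bm{\pi}^*_{-i})$ and $v_{jz}(s, \pi^*_{jz}, \bm{\pi}^*_{-j}) \ge v_{jz}(s, \pi^*_{iz}, \bm{\pi}^*_{-j})$. Rewriting the right-hand sides with the exchangeability identity turns these into $v_{iz} \ge v_{jz}$ and $v_{jz} \ge v_{iz}$, which force equality --- exactly the two-sided squeeze used in Proposition~\ref{equal-value}.

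I expect the main obstacle to be isolating precisely where the same-type assumption is genuinely needed, rather than any heavy computation. The deviation inequalities are meaningful only when the imitated policy is feasible for the imitating agent, and feasibility of $\pi^*_{jz}$ for agent $i$ is exactly what ``same type'' guarantees; for agents of different types the feasible sets $\Pi_z$ may differ, the squeeze cannot be closed, and indeed cross-type values need not coincide. Making this dependence explicit --- and verifying that the exchangeability identity itself requires no more than shared type together with infinitesimal mass --- is the part I would handle most carefully.
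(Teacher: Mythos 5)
Your proposal is correct and follows essentially the same route as the paper: the paper offers no separate proof for this corollary, treating it as an immediate consequence of running the two-sided no-deviation squeeze of Proposition~2 (enabled by Proposition~1's zero-mass argument) among agents that share a feasible policy set. Your explicit ``exchangeability'' identity is simply a more careful articulation of the step the paper phrases as ``$i$ and $j$ are interchangeable,'' and your observation that the same-type hypothesis is exactly what makes the imitated policies feasible is the correct (and only) point where the multi-type setting enters.
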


In non-atomic case, individual agents have zero mass and we have shown here that values of agents with same local states will be equal at equilibrium. We now move  onto domains with large number of agents but not completely non-atomic. Since agents have non-zero mass, the proofs above do not directly hold. 

\subsubsection{Nearly Non-Atomic Agents:}
In aggregation systems, there are many agents but each agent has a small amount of mass. For this case of nearly non-atomic agents, we are only able to provide the proof for reward setting \textbf{R1}. 
\begin{prop}
When agents have non-zero mass in SNCG, consider two agents, $i$ and $j$ in zone $z$ and let $\bm{\pi}^* = (\pi_1^*,\cdots,\pi_i^*, \cdots, \pi_j^*, \cdots)$ be the equilibrium policy. For \textbf{R1} setting, we have:
$$ v_{iz}(s, \pi_{i}^*, \bm{\pi}^*_{-i}) = v_{jz}(s, \pi_{j}^*, \bm{\pi}^*_{-j})$$
\end{prop}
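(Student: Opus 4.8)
The plan is to reduce the two-agent comparison to a single-agent best-response problem against a common, fixed aggregate flow, exploiting the defining feature of the \textbf{R1} setting: the per-step reward $\mc{R}_z(s,\phi(s,\bm{a}))$ depends only on the local state $z$, the global state $s$ and the aggregate flow $\phi$, and \emph{not} on the agent's own action. The first step is to note that both $v_{iz}(s,\pi_i^*,\bm{\pi}^*_{-i})$ and $v_{jz}(s,\pi_j^*,\bm{\pi}^*_{-j})$ are evaluated under the \emph{same} full joint policy $\bm{\pi}^*$, hence under the same aggregate flow $\phi^*$ and the same distribution over global-state trajectories and transitions $\mc{T}(\cdot\mid s,\bm{a})$. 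Consequently, for any agent starting in local state $z$ the value is a functional of (i) the exogenous flow $\phi^*$ and (ii) the agent's own local-state trajectory, which is the only thing its policy controls.

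Second, I would invoke the \textbf{R1} structure to remove the immediate reward from the comparison: since $\mc{R}_z(s,\phi^*)$ is identical for every agent occupying local state $z$ in global state $s$ (it does not see the individual action), agents $i$ and $j$ collect exactly the same first-stage reward, and any difference $v_{iz}-v_{jz}$ must come solely from the discounted continuation term in Equation~\ref{eq:val}. I would then define the single-agent value against the frozen equilibrium flow, $V^*(s,z)=\mc{R}_z(s,\phi^*)+\gamma\max_{a\in\mc{A}_z}\int_{s'}\mc{T}(s'\mid s,\bm{a})\,\mathbb{E}[V^*(s',z')]\,ds'$, and observe that because the reward sits outside the maximization, every agent in local state $z$ that best-responds against $\phi^*$ attains precisely $V^*(s,z)$, a quantity that depends on $(s,z)$ but not on the agent's identity.

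Third, to legitimately freeze the flow I would use an equal-mass policy swap together with the permutation symmetry of the game. Taking $i$ and $j$ to carry the same (small, non-zero) mass $\delta$ and to share local state $z$, I claim that exchanging their policies leaves the aggregate flow unchanged at every time step: at $t=0$ the exchange merely relabels which $\delta$-mass takes which action in zone $z$, and an induction on $t$ shows the multiset of (local state, action) pairs occupied by $\{i,j\}$ --- and hence $\phi$ and every component $\phi^a$ --- is preserved, so the transitions and global trajectory are identical. This device lets an agent adopt the other's policy \emph{without} perturbing the environment; combined with the best-response property it shows that $\pi_i^*$ and $\pi_j^*$ are interchangeable optimal responses against $\phi^*$ from $(s,z)$, so both realise $V^*(s,z)$ and therefore $v_{iz}(s,\pi_i^*,\bm{\pi}^*_{-i})=V^*(s,z)=v_{jz}(s,\pi_j^*,\bm{\pi}^*_{-j})$.

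The main obstacle --- and the reason the non-atomic proof does not carry over verbatim --- is precisely the non-zero mass: a genuine unilateral deviation moves $\delta$ units of mass between actions and thus perturbs $\phi$ (and with it every agent's reward, including the deviator's) by an $O(\delta)$ amount, so one cannot equate a deviation value with an equilibrium value the way Proposition~\ref{agent-value} did. The equal-mass swap is what circumvents this, because it keeps $\phi$ exactly fixed, and this is exactly where \textbf{R1} is indispensable: with the reward independent of the agent's own action, equal immediate rewards for co-located agents are automatic and only the (swap-invariant) continuation needs to be reconciled. Under \textbf{R2} the immediate reward $\mc{R}_z(s,\phi^a)$ differs for agents choosing different actions --- which equilibria of a congestion game generically force them to do --- so a first-stage discrepancy survives and the same chain of equalities breaks down.
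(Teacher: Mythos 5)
Your opening moves coincide with the paper's proof: the paper likewise starts from the observation that $(\pi_i^*,\bm{\pi}^*_{-i})$ and $(\pi_j^*,\bm{\pi}^*_{-j})$ are the \emph{same} joint policy, so the joint action $\bm{a}$, the aggregate mass $\phi(s,\bm{a})$, the \textbf{R1} reward $R_z(s,\phi(s,\bm{a}))$ (which does not see the individual action) and the transition kernel $\mc{T}(\cdot\mid s,\bm{a})$ are identical in the two value computations. From there the paper simply runs an induction on the horizon: equality holds at $t=1$ because the one-step rewards coincide, and at $t=m+1$ the reward and transition terms coincide while the continuation coincides by the induction hypothesis. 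Crucially, the paper's argument never invokes the best-response property of the equilibrium; it uses only the identity of the joint policy. (Your instinct that the continuation needs more care when $\pi_i^*\neq\pi_j^*$ and the two agents land in different local states is a fair observation about the paper's own induction, but the machinery you build does not close that door either.)

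Where you diverge is in how you handle the continuation, and this is where there is a genuine gap. You claim that, because $\bm{\pi}^*$ is an equilibrium, both $\pi_i^*$ and $\pi_j^*$ are \emph{optimal responses against the frozen flow} $\phi^*$ and therefore both realise the same optimal value $V^*(s,z)$. For agents of non-zero mass $\delta$ this does not follow from the Nash condition: the inequality $v_{iz}(s,\pi_{iz}^*,\bm{\pi}^*_{-i})\ge v_{iz}(s,\pi',\bm{\pi}^*_{-i})$ compares $\pi_i^*$ evaluated under $\phi^*$ with $\pi'$ evaluated under the \emph{perturbed} flow obtained by moving $i$'s $\delta$ units of mass, so it shows that $\pi_i^*$ beats alternatives against perturbed flows, not that it maximises the frozen-flow objective $W(\pi;\phi^*)$. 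The equal-mass swap does not repair this: it shows that $i$ playing $\pi_j^*$ in the \emph{swapped} profile earns $j$'s equilibrium value (since the flow is preserved), but the swap is a bilateral change, not a unilateral deviation from $\bm{\pi}^*$, so the equilibrium condition yields no comparison between that value and $v_{iz}(s,\pi_{iz}^*,\bm{\pi}^*_{-i})$. Chasing the inequalities that are actually available gives only $\lvert v_{iz}-v_{jz}\rvert = O(\delta)$ --- the kind of bound the paper reserves for the \textbf{R2} case --- rather than the exact equality asserted. To match the proposition, drop the optimality and swap machinery and close the recursion directly by induction on the horizon, using only the fact that both values are computed under the identical joint policy.
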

\noindent \textbf{Proof Sketch.}
Since the joint policy, $\bm{\pi}^*$ is the same for both $i$ and $j$, $\bm{a}$ and consequently $\phi(s,\bm{a})$ are the same for both $v_{iz}(s, \pi_{i}^*, \bm{\pi}^*_{-i})$ and $v_{jz}(s, \pi_{j}^*, \bm{\pi}^*_{-j})$ . Therefore, $R_z(s,\phi(s,\bm{a}))$ is the same as it is defined on zone and not on individual action. Transition function is on joint state and action space, so we can recursively show that future values are also the same. $\hfill \qedsymbol$

For the reward setting \textbf{R2}, we are only able to provide a trivial theoretical upper bound on the difference $v_{iz}(s, \pi_{i}^*, \bm{\pi}^*_{-i}) - v_{jz}(s, \pi_{j}^*, \bm{\pi}^*_{-j})$ as shown in the appendix. Therefore, we empirically evaluate our hypothesis that values of agents in same state are equal with reward setting \textbf{R2}.

SNCG model requires transition and reward models, which are typically hard to have  a priori. Hence, using the insight of Proposition~\ref{equal-value}, we provide a model free deep multi-agent learning approach to compute approximate equilibrium joint policies in multi-agent systems where there are large (yet finite) number of agents. 
We demonstrate in our experimental results, our solutions are better (both in terms of overall solution quality and reduction in revenue variance of all agents) than the leading approaches for MARL problems on multiple benchmark problems.

\section{Local Variance minimizing Multi-agent Q-learning, LVMQ} \label{sec-vmq}
Given the propositions in section \ref{eq-sncg} for SNCGs, we hypothesize that values of all the agents present in any local state are equal at equilibrium (even for reward case \textbf{R2}) when there are a large number of agents with minor contributions from each agent (nearly non-atomic agents). This translates to \textbf{having zero variance\footnote{We have used zero variance as an indicative of equal values of agents in the paper, however, other measures such as standard deviation can also be used for the same purpose.} in individual agent values at each local state, while individual agents are playing their best responses}. Please note that a joint policy which yields equal values in local states is an equilibrium policy only if agents are also playing their best response strategy. This is the key idea of our LVMQ approach and is an ideal insight for approximating equilibrium solutions in aggregation systems. The centralized entity can focus on ensuring values of agents in same local states are (close to) equal by estimating a joint policy which minimizes variance in values, while the individual suppliers can focus on computing best response solutions. During learning phase, the role of the
\squishlist
\item Central entity is to ensure that the exploration of individual agents moves towards a joint policy where the variance in values of agents in a local state is minimum.  
\item Individual agents is to learn their best responses to the historical behaviour of the other agents based on guidance from central entity. 
\squishend

We now describe the four main steps of the algorithm.\\
\noindent \textbf{\textit{Central agent suggests local variance minimizing joint action, $\mathbf{a}^c$ to individual agents:}}
We employ policy gradient framework for central agent to learn a joint policy which minimizes variance in values of individual agents in local states. 
Learning experience of the central agent is given by $<s,\bm{a},\nu>$, where $\nu$ is the mean of variances in the current values of agents in local states, i.e., $\nu = \dfrac{1}{|\mc{Z}|} \sum_{z \in \mc{Z}} \nu_z$. $\nu_z$ represents the variance in values of agents in $\mc{N}^s_z$. We define two parameterized functions: joint policy function $\mu(s; \theta_{\mu})$ (joint policy which yields minimum variance in individual values) and variance function $\sigma(s, \bm{a}; \theta_\sigma)$. Since the goal is to minimize variance, we need to update joint policy parameters in the negative direction of the gradient of $\sigma(s, \bm{a})$, i.e., $-\nabla_{\theta_{\mu}}\sigma(s, \mu(s;\theta_{\mu}); \theta_\sigma)$. The gradient for policy parameters $\theta_{\mu}$ can be computed using chain rule as follows
{\small \begin{align}
-\nabla_{\theta_{\mu}}\sigma(s, \mu(s;\theta_{\mu}); \theta_\sigma) = -\nabla_{\theta_{\mu}}\mu(s;\theta_{\mu})\cdot \nabla_{\bm{a}}\sigma(s, \bm{a}; \theta_\sigma)|_{\bm{a} = \mu(s;\theta_{\mu})} \label{gradient}
\end{align}
}
\noindent \textbf{\textit{Individual agents play suggested or best response action:}} Individual agents either follow the suggested action with $\epsilon_1$ probability or play their best response policy with $1-\epsilon_1$ probability. While playing the best response policy, the individual agents explore with $\epsilon_2$ probability (i.e. $\epsilon_2$ fraction of $(1-\epsilon_1)$ probability) and with the remaining probability (($1-\epsilon_2$) fraction of $(1-\epsilon_1)$) they play their best response action. Both the $\epsilon$ values are decayed exponentially. The individual agents $i$ maintain a network $Q_i(s, z, a; \theta_i)$ to approximate the best response to historical behavior of the other agents in local state $z$ when global state is $s$. The learning experience of agent $i$ is given by $<s, z, a_i, r_i, s', z'>$.

\noindent \textbf{\textit{Environment moves to the next state:}} All the individual agents observe their individual reward and update their best response values. Central agent observes the true-joint action $\bm{a}$ performed by the individual agents. Based on the true joint-action and variance ($\nu$) in the values of agents, the central agent updates its own learning. 

\noindent \textbf{\textit{Compute loss functions and optimize value and policy:}} As common with deep RL methods \cite{mnih2015human,foerster2017stabilising}, replay buffer is used to store experiences ($\mc{J}$ for the central agent and $\mc{J}_i$ for individual agent $i$) and target networks (parameterized with $\theta^{-}$) are used to increase the stability of learning. We define $\mc{L}_{\theta_\sigma}$, $\mc{L}_{\theta_\mu}$ and $\mc{L}_{\theta_i}$ as the loss functions of $\sigma$, $\mu$ and $Q_i$ networks respectively. The loss values are computed based on mini batch of experiences as follows
	{\small 
	\begin{align*}
	&\mc{L}_{\theta_{\sigma}} = \mathbb{E}_{(s,\nu,\bm{a}) \sim \mathcal{J}}\Big[ \Big( \nu  - \sigma( s, \bm{a}; \theta_\sigma)\Big)^2\Big] \\
&\mc{L}_{\theta_{\mu}} = \mathbb{E}_{(s) \sim \mathcal{J}}\Big[ -\nabla_{\theta_{\mu}} \mu(s;\theta_{\mu}) \nabla_{\bm{a}} \sigma(s, \bm{a};\theta_\sigma)|_{\bm{a} = \mu(s;\theta_{\mu})} \Big]  \\
&\mc{L}_{\theta_i} = \mathbb{E}_{(s,z,a, r, s', z') \sim \mc{J}_i}\Big[ \big(r + \gamma \cdot \max_{a'} Q_{i}(s', z', a'; \theta^{-}_i)  \\& \quad \quad \quad \quad \quad - Q_{i}(s, z, a; \theta_i) \big)^2\Big] 
	\end{align*}
	}
$\mc{L}_{\theta_\sigma}$ is computed based on mean squared error, $\mc{L}_{\theta_i}$ is computed based on TD error	\cite{sutton1988learning} and $\mc{L}_{\theta_\mu}$ is computed based on the gradient provided in Equation \ref{gradient}. Using these loss functions, policy and value functions are optimized. We have provided detailed steps of LVMQ in the appendix.

\begin{figure*}[h]
\centering
\subfloat[Real-world data \label{real-world-mean}]{\includegraphics[scale=0.24]{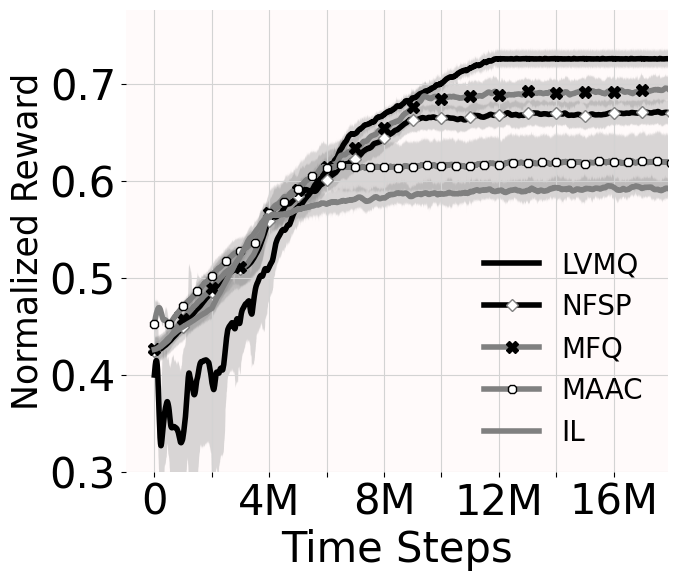}} 
\subfloat[DAR = 0.4 \label{dar04-mean}]{\includegraphics[scale=0.24]{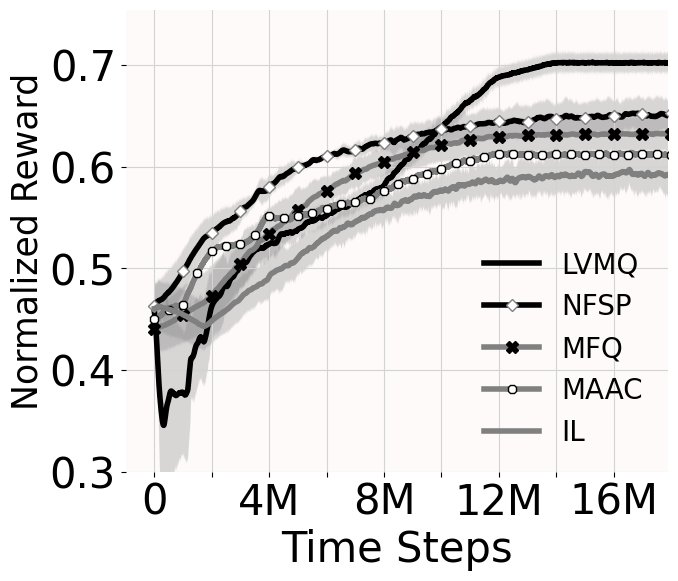}} 
\subfloat[DAR = 0.5 \label{dar05-mean}]{\includegraphics[scale=0.24]{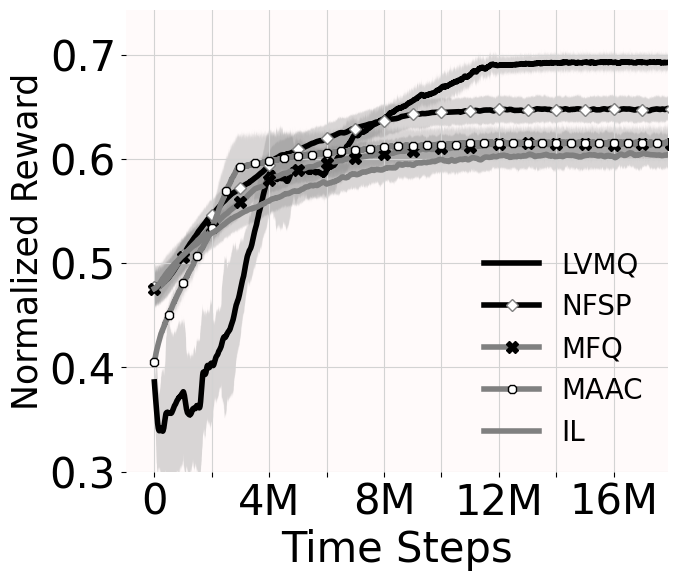}} 
\subfloat[DAR = 0.6 \label{dar06-mean}]{\includegraphics[scale=0.24]{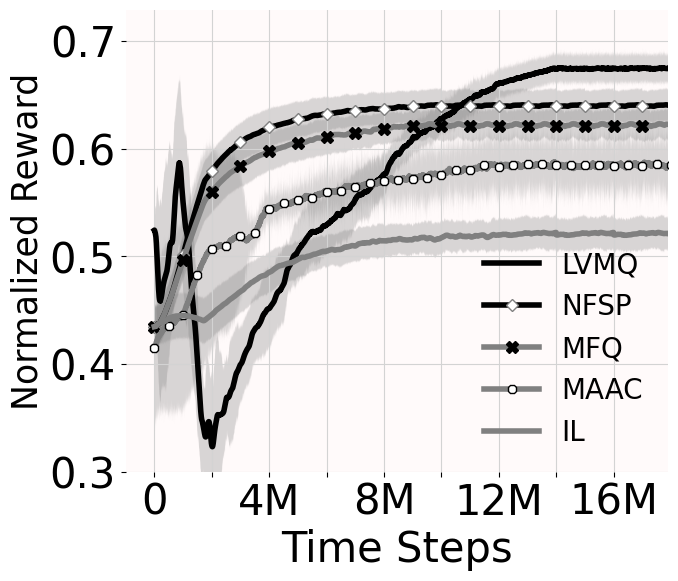}} \\ 


\subfloat[Real-world data \label{var-rw}]{\includegraphics[scale=0.24]{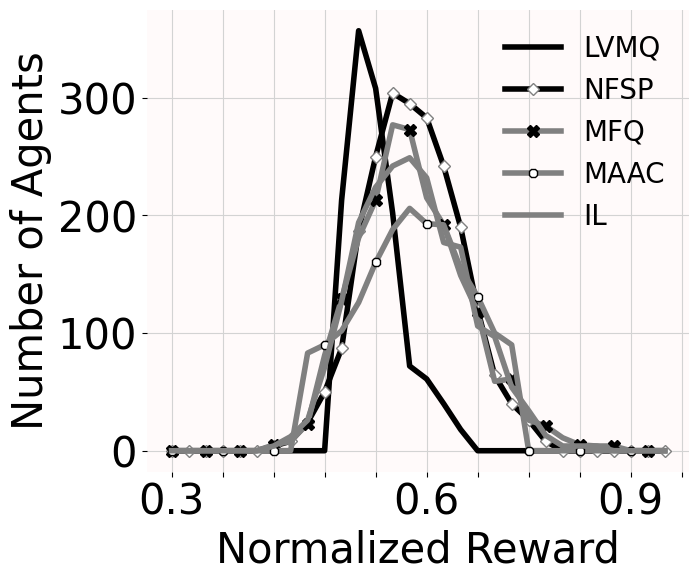}} 
\subfloat[DAR=0.4 \label{var-04}]{\includegraphics[scale=0.24]{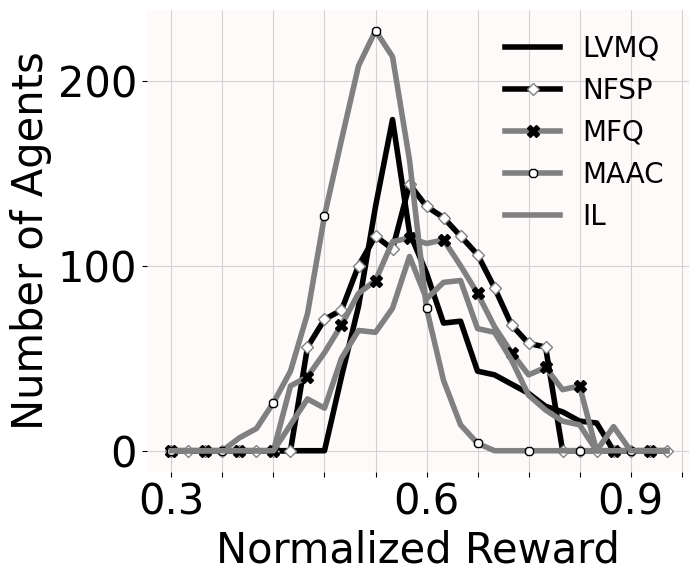}} 
\subfloat[DAR = 0.5 \label{var-05}]{\includegraphics[scale=0.24]{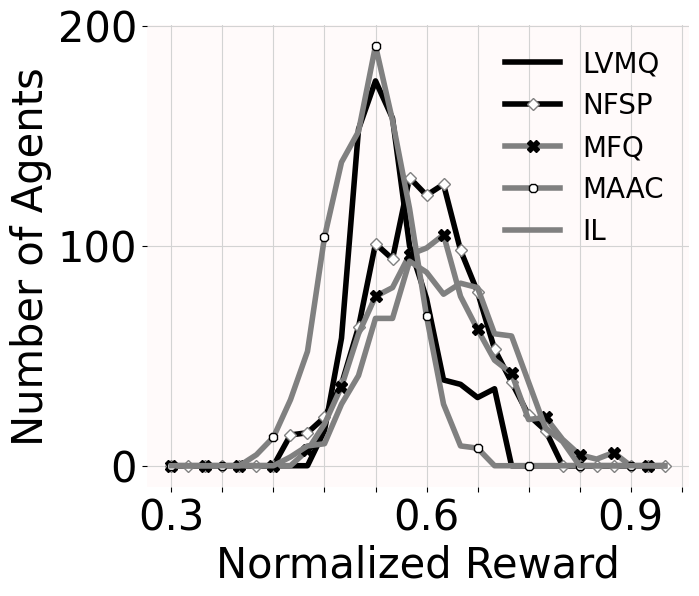}} 
\subfloat[DAR = 0.6 \label{var-06}]{\includegraphics[scale=0.24]{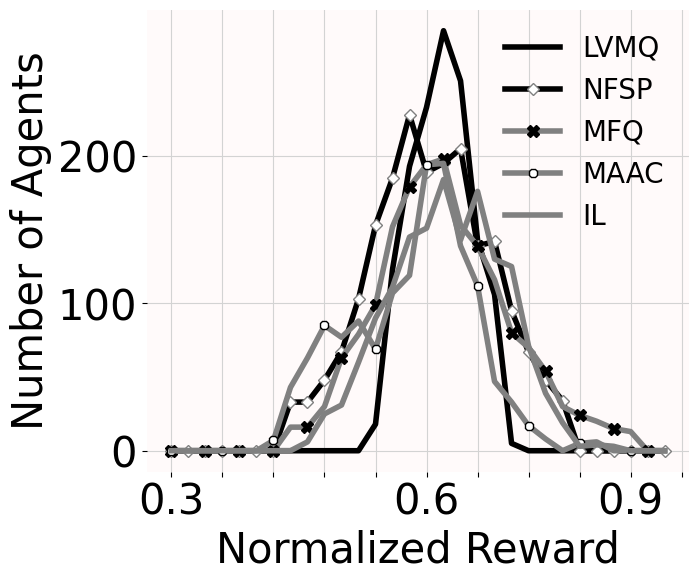}}
\caption{Mean reward (top) and distribution of values of agents in a zone (bottom) for taxi simulator using real-world and synthetic data set. Mean rewards (solid lines) are average value over 5 random seeds. Shaded regions represent one standard deviation. Value distribution is for agents in a zone with high variance in individual values.}
\label{results}
\end{figure*}
\section{Experiments}
We perform experiments on three different domains, taxi simulator based on real-world and synthetic data set \cite{verma2019entropy,verma2019correlated}, a single stage packet routing \cite{krichene2014learning} and multi-stage traffic routing \cite{wiering2000multi}. In all these domains a central agent that can assist (or provides guidance to) individual agents in approximating equilibrium policies is present. MARL strategy is suitable for these domains as these are large multi-agent systems where model based planning is not scalable. For taxi domain, we do not compare with other aggregation system related methods \cite{shah2020neural,lowalekar2018online,liu2020context} as  these methods focus on joint value and disregard the fact that the individual drivers are self-interested. Different from these methods, we focus on learning from individual drivers' perspective while considering that other drivers are also learning simultaneously.

We compare with four baseline algorithms: Independent Learner (IL), NFSP \cite{heinrich2016deep}, MFQ \cite{pmlrv80yang18d} and MAAC \cite{lowe2017multi}.  IL is a traditional Q-Learning algorithm that does not consider the actions performed by the other agents. Similar to LVMQ, MFQ and MAAC are also CTDE algorithms and they use joint action information at the time of training. However, NFSP is a self play learning algorithm and learns from individual agent's local observation. Hence, for fair comparison, we provide joint action information to NFSP as well. As mentioned by \cite{verma2019correlated}, we also observed that the original NFSP without joint action information performs worse that NFSP with joint action information. In LVMQ, the central agent learns from the collective experiences of the individual agents and suggests individuals based on this learning, whereas other algorithms receive information only about the joint action. This is one of the reasons of LVMQ performing better than other algorithms.

For decentralized learning, ideally each agent should maintain their own neural network. However, simulating thousands of learning agents at the same time requires extensive computational resources hence, we simulated 1000 agents by using 100 neural networks with 10 agents sharing a single network (providing agent id as one of the input to the network). Letting agents share a single network while providing agent id to the network is common when decentralized learning is done for a large number of agents \cite{yang2017study,pmlrv80yang18d}. 
For all the individual agents, we performed $\epsilon$-greedy exploration and it was decayed exponentially. Training was stopped once $\epsilon$ decays to 0.05. We experimented with different values of aniticipatory parameter for NFSP and 0.1 provided us the best results. The results are average over runs with 5 different random seeds.

\subsubsection{Taxi Simulator:}
We build a taxi simulator based on both real-world and synthetic data set. The data set is from a taxi company in a major Asian city. It contains information about trips (time stamp, origin/destination locations, duration, fare) and movement logs of taxi drivers (GPS locations, time stamp, driver id, status of taxi). We used methods described in \cite{verma2017augmenting} to determine the total number of zones (111) for the simulation. Each zone is considered as a local state. Then we used the real world data to compute the demand between two zones and the time taken to travel between the zones. For simulation, we computed proportional number of demand between any pair of zones based on the number of agents being used in the simulator.

We also perform experiments using synthetic data set to simulate different features such as: 
(a) Demand-to-Agent-Ratio (DAR): the average amount of demand per time step per agent; (b) trip pattern: the average length of trips can be uniform for all the zones or there can be few zones which get longer trips (non-uniform trip pattern); and (c) demand arrival rate: arrival rate of demand can either be static w.r.t. the time or it can vary with time (dynamic arrival rate).

\begin{figure*}
\centering
\subfloat[Packet Routing\label{pr}]{\includegraphics[scale=0.20]{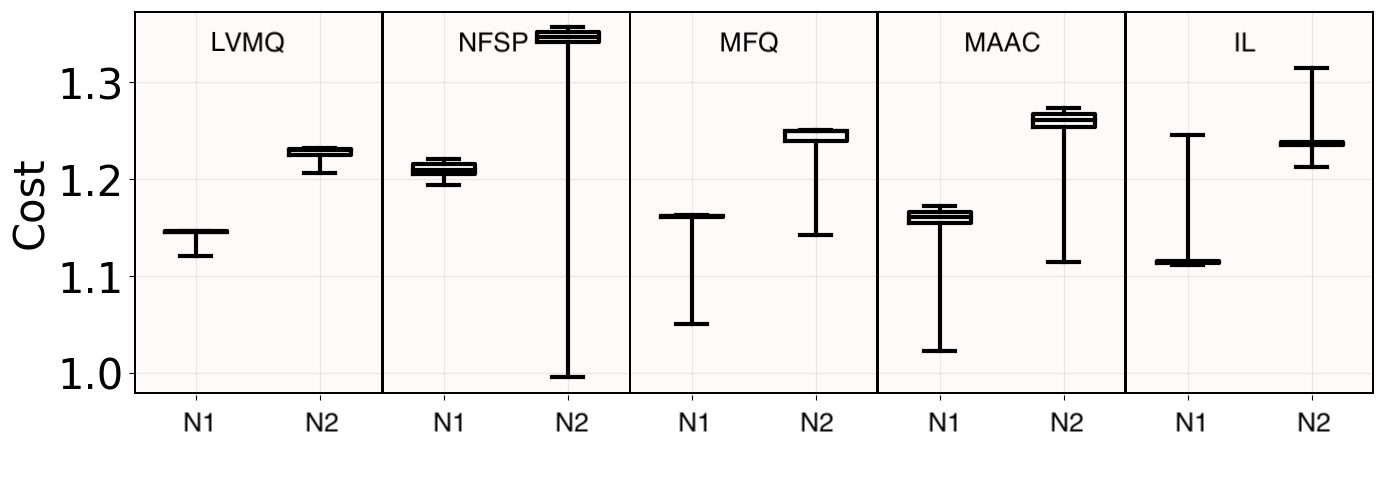}} 
\subfloat[Multi-stage Traffic Routing \label{msp}]{\includegraphics[scale=0.20]{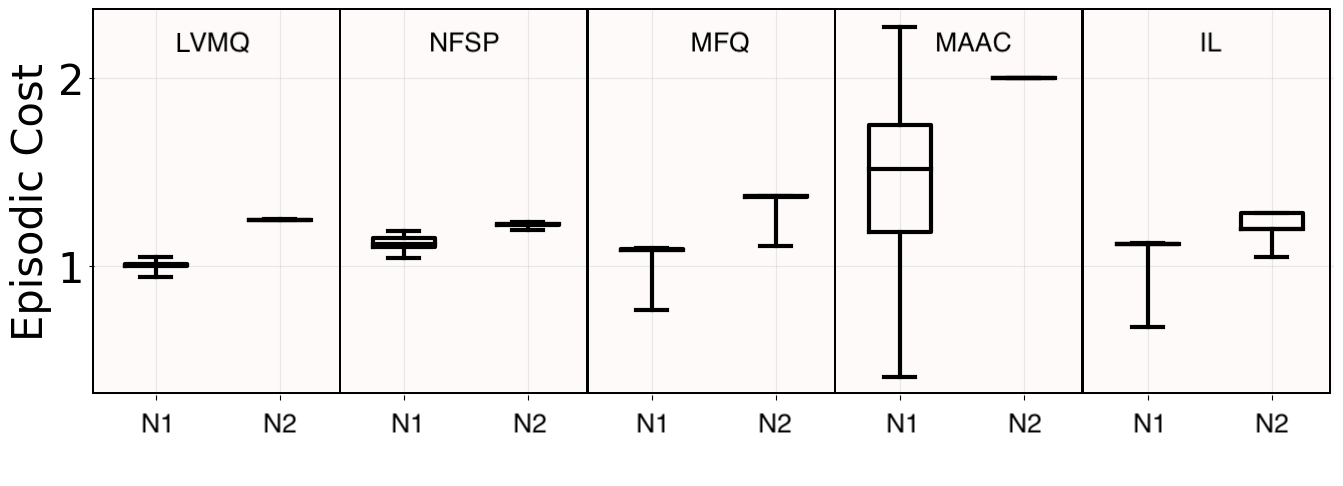}} 
\caption{Variance in values of agents for packet routing and multi-stage traffic routing example}
\label{ms-routing}
\end{figure*}
As agents try to maximize their long term revenue, we use mean reward of agents (w.r.t. the time) as the learning progresses as a comparison metric and show that LVMQ learns policy which yield higher mean values. The mean reward plots (\ref{real-world-mean}-\ref{dar06-mean}) are for the running average of mean payoff of all the agents for every $T$ (=1000) time steps. We also show variance in the values of agents after learning has converged. Plots \ref{var-rw} - \ref{var-06} show distribution of values of agents in a zone. Though the zone with highest variance in values was not same for all the 5 algorithms, the top 5 zones with highest variance were almost same for these algorithms. Here we show values of agents in the selected zone over 2$T$ time steps. We can see that the variance is minimum for LVMQ (evident from its narrower curve).

Figure \ref{real-world-mean} show that agents earn $\approx$5-10\% more value for real-world data set (an improvement of 0.5\% is considered a significant improvement \cite{xu2018large} in aggregation systems) than NFSP, MFQ and MAAC. The mean reward for LVMQ is ~3-5\% more than the second best performing algorithm. The worst performance of IL is expected as it does not receive any extra information from the centralized agent. 

Figure \ref{dar04-mean} plots mean reward of agents for a setup with dynamic arrival rate and non-uniform trip pattern with DAR=0.4. The mean reward for LVMQ is $\approx$4-10\% higher that NFSP, MFQ and MAAC. Figures \ref{dar05-mean} show results for a setup with dynamic arrival rate, uniform trip pattern and DAR=0.5. 
Comparison for an experimental setup with static arrival rate, non-uniform trip pattern and DAR=0.6 is shown in Figure \ref{dar06-mean}. 
Treating IL as a baseline, the improvement (percentage increase in mean reward) achieved by LVMQ is $\approx$10-16\%, NFSP is $\approx$5-12\%, MFQ is $\approx$1-10\% and MAAC is $\approx$1-6\%. This means that LVMQ is able to capture \textit{lost-demand} (able to serve demand before they expire). We also observed that the variance in values of individual agents was minimum for LVMQ. We discussed in Section \ref{sec-vmq} that for a joint policy to be an equilibrium policy, agents should be playing their best responses in addition to having equal values (zero variance) in same local states. As agents are playing their best response policy and variance in values is low as compared to other algorithms, LVMQ learns policies that are closer to an equilibrium policy.


\subsubsection{Packet Routing:}
To compare the performance with exact equilibrium solution, we performed experiments with a single stage (exact equilibrium values for single stage can be computed using linear programming) packet routing game \cite{krichene2014learning} explained in Section \ref{bg} (Figure \ref{routing}). The cost incurred on a path is the sum of costs on all the edges in the path. The costs functions for the edges when mass of population on the edge is $\phi$ are given by:
$c_{AB}(\phi) = \phi + 2 $, $c_{AC}(\phi) = \phi /\ 2,  c_{AD}(\phi) = \phi, 
c_{DB}(\phi) = \phi /\ 3, c_{CD}(\phi) = 3\phi, c_{EC}(\phi) = 1 /\ 2,
c_{CF}(\phi) = \phi , c_{DF}(\phi) = \phi /\ 4, c_{EF}(\phi) = \phi + 1$

If the cost functions are known, exact equilibrium policy can be computed by minimizing non-atomic version \cite{krichene2015online} of Rosenthal potential function \cite{rosenthal1973class}. We use the exact equilibrium policy and corresponding costs to compare quality of the policy learned. We also compute \textit{epsilon} (\textit{epsilon} as in epsilon-equilibrium \cite{radner1980collusive}) values of the learned policy, which is the maximum reduction in cost of an agent when it changes its policy unilaterally. Table \ref{eq-policy} compares the policies and \textit{epsilon} values where the first row contain values computed using potential minimization method. 
In the table, the policy is represented as $
((\pi^{AB}_1, \pi^{ACDB}_1,\pi^{ADB}_1), (\pi^{EF}_2, \pi^{ECDF}_2,\pi^{ECF}_2))$,
where $\pi^p_i$ is the fraction of the population of type $i$ selecting path $p$.  We see that LVMQ policy is closest to the equilibrium policy and \textit{epsilon} value is also lowest for it.

The equilibrium cost on paths computed by the potential minimization method are: $
AB = 2, ACDB = ADB = 1.14,
EF = ECDF = ECF = 1.22$. 
Figure \ref{pr} provide variance in costs of agents for both the population type. We can see that not only is the variance in the costs of agents minimum for LVMQ but the values are also very close to the values for exact equilibrium. 
We observed that other algorithms were able to perform similar to LVMQ if there is a clear choice of path based on their cost functions (the cost on one of the paths is very low such that it is advantageous for everyone to select that path). However, when complexity is increased by designing the cost function such that the choice is not very clear (see appendix), we observed that LVMQ starts outperforming other algorithms. The better performance of LVMQ is due to its advantage of receiving suggestions from a learned central agent.

{\small \begin{table}
\centering
\caption{Comparison of policies and \textit{epsilon} values}
\begin{tabular}{ | c | c | c |}
\hline 
Method & policy & \textit{epsilon} value \\ \hline
Equilibrium  &((0, 0.187, 0.813), & 0 \\Policy &(0.223, 0.053, 0.724))&\\\hline
LVMQ & ((0, 0.187, 0.813), & 0.023\\ &(0.224, 0.045, 0.731))&\\\hline
NFSP & ((0, 0.122, 0.878), &  0.342\\ &(0.005, 0.148, 0.847))&\\\hline
MFQ & ((0, 0.171, 0.829),  &0.110\\ & (0.210, 0.040, 0.750)) &\\\hline
MAAC & ((0, 0.170, 0.830),  &0.140\\& (0.201,  0.035, 0.764)) &\\\hline
IL & ((0, 0.219,  0.781), &0.129\\& (0.215, 0.048,  0.737))  &\\ \hline
\end{tabular}
\label{eq-policy}
\end{table}}

\subsubsection{Multi-Stage Traffic Routing:}
We use the same network provided in Figure \ref{routing} to depict a traffic network where two population of agents $\mc{N}_1$ and $\mc{N}_2$ navigate from node $A$ to node $B$ and from node $E$ to node $F$ respectively. Unlike to the packet routing example, agents decide about their next edge at every node. Hence it is an example of SNCG 
and the values of agents from a population at a given node would be equal at equilibrium. 

In this example, agents perform episodic learning and the episode ends when the agent reaches its destination node. The distribution of mass of population over the nodes is considered as state. Figure \ref{msp} shows that the variance in values of both the population type is minimum for LVMQ. Furthermore, we notice that for both single-stage and multi-stage cases, the 
setup is similar for agent population $\mc{N}_2$. However, for agents from $\mc{N}_1$, the values will be different from single-stage case. For example, agents selecting path $ACDB$ and $ADB$ will reach the destination node at different time steps and hence cost of agents on edge $DB$ will be different from the single-stage case. Hence we can safely assume that the equilibrium value of agents from population $\mc{N}_2$ will be 1.22 (as computed for the single-stage case) which is the value for LVMQ as shown in Figure \ref{msp}.

\section{Conclusion}
We propose a Stochastic Non-atomic Congestion Games (SNCG) model to represent anonymity in interactions and infinitesimal contribution of individual agents for aggregation systems. We show that the values of all the agents present in a local state are equal at equilibrium in SNCG. Based on this property we propose LVMQ which is a CTDE algorithm to learn approximate equilibrium policies in cases with finite yet large number of agents. Experimental results on multiple domains depict that LVMQ learns better equilibrium policies than other state-of-the-art algorithms.

\balance
\bibliography{vmq_main}

\begin{thebibliography}{49}
\providecommand{\natexlab}[1]{#1}

\bibitem[{Bilancini and Boncinelli(2016)}]{bilancini2016strict}
Bilancini, E.; and Boncinelli, L. 2016.
\newblock Strict Nash equilibria in non-atomic games with strict single
  crossing in players (or types) and actions.
\newblock \emph{Economic Theory Bulletin}, 4(1): 95--109.

\bibitem[{Bogachev(2007)}]{bogachev2007measure}
Bogachev, V.~I. 2007.
\newblock \emph{Measure theory}, volume~1.
\newblock Springer Science \& Business Media.

\bibitem[{Brown(1951)}]{brown1951iterative}
Brown, G.~W. 1951.
\newblock Iterative solution of games by fictitious play.
\newblock \emph{Activity analysis of production and allocation}, 13: 374--376.

\bibitem[{Chau and Sim(2003)}]{chau2003price}
Chau, C.~K.; and Sim, K.~M. 2003.
\newblock The price of anarchy for non-atomic congestion games with symmetric
  cost maps and elastic demands.
\newblock \emph{Operations Research Letters}, 31(5): 327--334.

\bibitem[{Christianos et~al.(2021)Christianos, Papoudakis, Rahman, and
  Albrecht}]{christianos2021scaling}
Christianos, F.; Papoudakis, G.; Rahman, M.~A.; and Albrecht, S.~V. 2021.
\newblock Scaling multi-agent reinforcement learning with selective parameter
  sharing.
\newblock In \emph{International Conference on Machine Learning}, 1989--1998.
  PMLR.

\bibitem[{Cui and Koeppl(2021)}]{cui2021approximately}
Cui, K.; and Koeppl, H. 2021.
\newblock Approximately solving mean field games via entropy-regularized deep
  reinforcement learning.
\newblock In \emph{International Conference on Artificial Intelligence and
  Statistics}, 1909--1917. PMLR.

\bibitem[{Foerster et~al.(2017)Foerster, Nardelli, Farquhar, Afouras, Torr,
  Kohli, and Whiteson}]{foerster2017stabilising}
Foerster, J.; Nardelli, N.; Farquhar, G.; Afouras, T.; Torr, P.~H.; Kohli, P.;
  and Whiteson, S. 2017.
\newblock Stabilising experience replay for deep multi-agent reinforcement
  learning.
\newblock In \emph{Proceedings of the 34th International Conference on Machine
  Learning-Volume 70}, 1146--1155. JMLR. org.

\bibitem[{Fotakis et~al.(2009)Fotakis, Kontogiannis, Koutsoupias, Mavronicolas,
  and Spirakis}]{fotakis2009structure}
Fotakis, D.; Kontogiannis, S.; Koutsoupias, E.; Mavronicolas, M.; and Spirakis,
  P. 2009.
\newblock The structure and complexity of Nash equilibria for a selfish routing
  game.
\newblock \emph{Theoretical Computer Science}, 410(36): 3305--3326.

\bibitem[{Hartman and Mikusinski(2014)}]{hartman2014theory}
Hartman, S.; and Mikusinski, J. 2014.
\newblock \emph{The theory of Lebesgue measure and integration}.
\newblock Elsevier.

\bibitem[{Heinrich, Lanctot, and Silver(2015)}]{heinrich2015fictitious}
Heinrich, J.; Lanctot, M.; and Silver, D. 2015.
\newblock Fictitious self-play in extensive-form games.
\newblock In \emph{International Conference on Machine Learning (ICML)},
  805--813.

\bibitem[{Heinrich and Silver(2016)}]{heinrich2016deep}
Heinrich, J.; and Silver, D. 2016.
\newblock Deep reinforcement learning from self-play in imperfect-information
  games.
\newblock \emph{arXiv preprint arXiv:1603.01121}.

\bibitem[{Hu and Wellman(2003)}]{hu2003nash}
Hu, J.; and Wellman, M.~P. 2003.
\newblock Nash Q-learning for general-sum stochastic games.
\newblock \emph{Journal of machine learning research}, 4(Nov): 1039--1069.

\bibitem[{Hu, Wellman et~al.(1998)}]{hu1998multiagent}
Hu, J.; Wellman, M.~P.; et~al. 1998.
\newblock Multiagent reinforcement learning: theoretical framework and an
  algorithm.
\newblock In \emph{ICML}, volume~98, 242--250. Citeseer.

\bibitem[{Huang, Caines, and Malham{\'e}(2003)}]{huang2003individual}
Huang, M.; Caines, P.~E.; and Malham{\'e}, R.~P. 2003.
\newblock Individual and mass behaviour in large population stochastic wireless
  power control problems: centralized and Nash equilibrium solutions.
\newblock In \emph{42nd IEEE International Conference on Decision and Control
  (IEEE Cat. No. 03CH37475)}, volume~1, 98--103. IEEE.

\bibitem[{H{\"u}ttenrauch et~al.(2019)H{\"u}ttenrauch, Adrian, Neumann
  et~al.}]{huttenrauch2019deep}
H{\"u}ttenrauch, M.; Adrian, S.; Neumann, G.; et~al. 2019.
\newblock Deep reinforcement learning for swarm systems.
\newblock \emph{Journal of Machine Learning Research}, 20(54): 1--31.

\bibitem[{Jain et~al.(2021)Jain, Patil, Jain, Khetarpal, and
  Precup}]{jain2021variance}
Jain, A.; Patil, G.; Jain, A.; Khetarpal, K.; and Precup, D. 2021.
\newblock Variance penalized on-policy and off-policy actor-critic.
\newblock \emph{AAAI}.

\bibitem[{Krichene, Drighes, and Bayen(2014)}]{krichene2014learning}
Krichene, W.; Drighes, B.; and Bayen, A.~M. 2014.
\newblock Learning nash equilibria in congestion games.
\newblock \emph{arXiv preprint arXiv:1408.0017}.

\bibitem[{Krichene, Drigh{\`e}s, and Bayen(2015)}]{krichene2015online}
Krichene, W.; Drigh{\`e}s, B.; and Bayen, A.~M. 2015.
\newblock Online learning of nash equilibria in congestion games.
\newblock \emph{SIAM Journal on Control and Optimization}, 53(2): 1056--1081.

\bibitem[{Li et~al.(2019)Li, Wu, Cui, Dong, Fang, and Russell}]{li2019robust}
Li, S.; Wu, Y.; Cui, X.; Dong, H.; Fang, F.; and Russell, S. 2019.
\newblock Robust multi-agent reinforcement learning via minimax deep
  deterministic policy gradient.
\newblock In \emph{Proceedings of the AAAI Conference on Artificial
  Intelligence}, volume~33, 4213--4220.

\bibitem[{Littman(1994)}]{littman1994markov}
Littman, M.~L. 1994.
\newblock Markov games as a framework for multi-agent reinforcement learning.
\newblock In \emph{Machine learning proceedings 1994}, 157--163.

\bibitem[{Littman(2001)}]{littman2001friend}
Littman, M.~L. 2001.
\newblock Friend-or-foe Q-learning in general-sum games.
\newblock In \emph{International Conference on Machine Learning (ICML)},
  volume~1, 322--328.

\bibitem[{Liu, Li, and Wu(2020)}]{liu2020context}
Liu, Z.; Li, J.; and Wu, K. 2020.
\newblock Context-Aware Taxi Dispatching at City-Scale Using Deep Reinforcement
  Learning.
\newblock \emph{IEEE Transactions on Intelligent Transportation Systems}.

\bibitem[{Lowalekar, Varakantham, and Jaillet(2018)}]{lowalekar2018online}
Lowalekar, M.; Varakantham, P.; and Jaillet, P. 2018.
\newblock Online spatio-temporal matching in stochastic and dynamic domains.
\newblock \emph{Artificial Intelligence}, 261: 71--112.

\bibitem[{Lowe et~al.(2017)Lowe, Wu, Tamar, Harb, Abbeel, and
  Mordatch}]{lowe2017multi}
Lowe, R.; Wu, Y.; Tamar, A.; Harb, J.; Abbeel, O.~P.; and Mordatch, I. 2017.
\newblock Multi-agent actor-critic for mixed cooperative-competitive
  environments.
\newblock In \emph{Advances in Neural Information Processing Systems},
  6379--6390.

\bibitem[{Marden(2012)}]{marden2012state}
Marden, J.~R. 2012.
\newblock State based potential games.
\newblock \emph{Automatica}, 48(12): 3075--3088.

\bibitem[{Mguni(2020)}]{mguni2020stochastic}
Mguni, D. 2020.
\newblock Stochastic potential games.
\newblock \emph{arXiv preprint arXiv:2005.13527}.

\bibitem[{Mnih et~al.(2015)Mnih, Kavukcuoglu, Silver, Rusu, Veness, Bellemare,
  Graves, Riedmiller, Fidjeland, Ostrovski et~al.}]{mnih2015human}
Mnih, V.; Kavukcuoglu, K.; Silver, D.; Rusu, A.~A.; Veness, J.; Bellemare,
  M.~G.; Graves, A.; Riedmiller, M.; Fidjeland, A.~K.; Ostrovski, G.; et~al.
  2015.
\newblock Human-level control through deep reinforcement learning.
\newblock \emph{Nature}, 518(7540): 529--533.

\bibitem[{Radner(1980)}]{radner1980collusive}
Radner, R. 1980.
\newblock Collusive behavior in noncooperative epsilon-equilibria of
  oligopolies with long but finite lives.
\newblock \emph{Journal of economic theory}, 22(2): 136--154.

\bibitem[{Rashid et~al.(2018)Rashid, Samvelyan, Schroeder, Farquhar, Foerster,
  and Whiteson}]{rashid2018qmix}
Rashid, T.; Samvelyan, M.; Schroeder, C.; Farquhar, G.; Foerster, J.; and
  Whiteson, S. 2018.
\newblock Qmix: Monotonic value function factorisation for deep multi-agent
  reinforcement learning.
\newblock In \emph{International Conference on Machine Learning}, 4295--4304.
  PMLR.

\bibitem[{Rosenthal(1973)}]{rosenthal1973class}
Rosenthal, R.~W. 1973.
\newblock A class of games possessing pure-strategy Nash equilibria.
\newblock \emph{International Journal of Game Theory}, 2(1): 65--67.

\bibitem[{Roughgarden(2007)}]{roughgarden2007routing}
Roughgarden, T. 2007.
\newblock Routing games.
\newblock \emph{Algorithmic game theory}, 18: 459--484.

\bibitem[{Roughgarden and Tardos(2002)}]{roughgarden2002bad}
Roughgarden, T.; and Tardos, {\'E}. 2002.
\newblock How bad is selfish routing?
\newblock \emph{Journal of the ACM (JACM)}, 49(2): 236--259.

\bibitem[{Shah, Lowalekar, and Varakantham(2020)}]{shah2020neural}
Shah, S.; Lowalekar, M.; and Varakantham, P. 2020.
\newblock Neural approximate dynamic programming for on-demand ride-pooling.
\newblock In \emph{Proceedings of the AAAI Conference on Artificial
  Intelligence}, volume~34, 507--515.

\bibitem[{Shapley(1953)}]{shapley1953stochastic}
Shapley, L.~S. 1953.
\newblock Stochastic games.
\newblock \emph{Proceedings of the national academy of sciences}, 39(10):
  1095--1100.

\bibitem[{Sherstan et~al.(2018)Sherstan, Ashley, Bennett, Young, White, White,
  and Sutton}]{sherstan2018comparing}
Sherstan, C.; Ashley, D.~R.; Bennett, B.; Young, K.; White, A.; White, M.; and
  Sutton, R.~S. 2018.
\newblock Comparing Direct and Indirect Temporal-Difference Methods for
  Estimating the Variance of the Return.
\newblock In \emph{UAI}, 63--72.

\bibitem[{Subramanian et~al.(2020)Subramanian, Poupart, Taylor, and
  Hegde}]{subramanian2020multi}
Subramanian, S.~G.; Poupart, P.; Taylor, M.~E.; and Hegde, N. 2020.
\newblock Multi type mean field reinforcement learning.
\newblock \emph{arXiv preprint arXiv:2002.02513}.

\bibitem[{Subramanian et~al.(2022)Subramanian, Taylor, Crowley, and
  Poupart}]{subramanian2022decentralized}
Subramanian, S.~G.; Taylor, M.~E.; Crowley, M.; and Poupart, P. 2022.
\newblock Decentralized Mean Field Games.
\newblock In \emph{Proceedings of the AAAI Conference on Artificial
  Intelligence}, volume~36, 9439--9447.

\bibitem[{Sunehag et~al.(2018)Sunehag, Lever, Gruslys, Czarnecki, Zambaldi,
  Jaderberg, Lanctot, Sonnerat, Leibo, Tuyls et~al.}]{sunehag2018value}
Sunehag, P.; Lever, G.; Gruslys, A.; Czarnecki, W.~M.; Zambaldi, V.; Jaderberg,
  M.; Lanctot, M.; Sonnerat, N.; Leibo, J.~Z.; Tuyls, K.; et~al. 2018.
\newblock Value-decomposition networks for cooperative multi-agent learning.
\newblock \emph{AAMAS}.

\bibitem[{Sutton(1988)}]{sutton1988learning}
Sutton, R.~S. 1988.
\newblock Learning to predict by the methods of temporal differences.
\newblock \emph{Machine learning}, 3(1): 9--44.

\bibitem[{Tamar, Di~Castro, and Mannor(2016)}]{tamar2016learning}
Tamar, A.; Di~Castro, D.; and Mannor, S. 2016.
\newblock Learning the variance of the reward-to-go.
\newblock \emph{The Journal of Machine Learning Research}, 17(1): 361--396.

\bibitem[{Verma and Varakantham(2019)}]{verma2019correlated}
Verma, T.; and Varakantham, P. 2019.
\newblock Correlated Learning for Aggregation Systems.
\newblock \emph{Uncertainity in Artificial Intelligence (UAI)}.

\bibitem[{Verma et~al.(2017)Verma, Varakantham, Kraus, and
  Lau}]{verma2017augmenting}
Verma, T.; Varakantham, P.; Kraus, S.; and Lau, H.~C. 2017.
\newblock Augmenting decisions of taxi drivers through reinforcement learning
  for improving revenues.
\newblock \emph{International Conference on Automated Planning and Scheduling
  (ICAPS)}.

\bibitem[{Verma, Varakantham, and Lau(2019)}]{verma2019entropy}
Verma, T.; Varakantham, P.; and Lau, H.~C. 2019.
\newblock Entropy based Independent Learning in Anonymous Multi-Agent Settings.
\newblock \emph{International Conference on Automated Planning and Scheduling
  (ICAPS)}.

\bibitem[{Watkins and Dayan(1992)}]{watkins1992q}
Watkins, C.~J.; and Dayan, P. 1992.
\newblock Q-learning.
\newblock \emph{Machine learning}, 8(3-4): 279--292.

\bibitem[{Wiering(2000)}]{wiering2000multi}
Wiering, M. 2000.
\newblock Multi-agent reinforcement learning for traffic light control.
\newblock In \emph{Machine Learning: Proceedings of the Seventeenth
  International Conference (ICML'2000)}, 1151--1158.

\bibitem[{Xu et~al.(2018)Xu, Li, Guan, Zhang, Li, Nan, Liu, Bian, and
  Ye}]{xu2018large}
Xu, Z.; Li, Z.; Guan, Q.; Zhang, D.; Li, Q.; Nan, J.; Liu, C.; Bian, W.; and
  Ye, J. 2018.
\newblock Large-scale order dispatch in on-demand ride-hailing platforms: A
  learning and planning approach.
\newblock In \emph{Proceedings of the 24th ACM SIGKDD International Conference
  on Knowledge Discovery \& Data Mining}, 905--913.

\bibitem[{Yang et~al.(2018)Yang, Luo, Li, Zhou, Zhang, and
  Wang}]{pmlrv80yang18d}
Yang, Y.; Luo, R.; Li, M.; Zhou, M.; Zhang, W.; and Wang, J. 2018.
\newblock Mean Field Multi-Agent Reinforcement Learning.
\newblock In \emph{International Conference on Machine Learning (ICML)},
  5567--5576.

\bibitem[{Yang et~al.(2017)Yang, Yu, Bai, Wang, Zhang, Wen, and
  Yu}]{yang2017study}
Yang, Y.; Yu, L.; Bai, Y.; Wang, J.; Zhang, W.; Wen, Y.; and Yu, Y. 2017.
\newblock A study of ai population dynamics with million-agent reinforcement
  learning.
\newblock \emph{arXiv preprint arXiv:1709.04511}.

\bibitem[{Yu et~al.(2021)Yu, Velu, Vinitsky, Wang, Bayen, and
  Wu}]{yu2021surprising}
Yu, C.; Velu, A.; Vinitsky, E.; Wang, Y.; Bayen, A.; and Wu, Y. 2021.
\newblock The surprising effectiveness of mappo in cooperative, multi-agent
  games.
\newblock \emph{arXiv preprint arXiv:2103.01955}.

\end{thebibliography}
\newpage
\setcounter{prop}{0}
\onecolumn
\section*{Supplementary Material}
\section{Properties of SNCG}
In this section, we show that values of agents in the same local state have either equal or close to equal values at equilibrium. We begin with the case of non-atomic agents and then move to the case with large number of agents and where mass of an agent is non-zero. 

\subsection{Non-Atomic Agents}
In case of non-atomic agents, we first prove that values of other agents do not change if only one agent changes its policy (Proposition~\ref{agent-value}). This property is then later used to prove that values of agents present in a local state are equal at equilibrium (Proposition~\ref{equal-value}).

\begin{prop} 
Values of other agents do not change if agent $i$ alone changes its policy from $\pi_i$ to $\pi'_i$, i.e., for any agent $j$ in any local state $z$:
$ v_{jz} (s, \pi_{jz}, \bm{\pi}_{-j} ) = v_{jz} (s, \pi_{jz}, \bm{\pi}'_{-j} )
\text{ where }  \bm{\pi}_{-j} = \big(\pi_i, (\pi_k)_{k \in {\mc{N} \setminus \{i, j\}}}\big) $ and $\bm{\pi}'_{-j} = \big(\pi'_i, (\pi_k)_{k \in {\mc{N} \setminus \{i, j\}}}\big)$
\end{prop}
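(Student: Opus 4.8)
The plan is to work directly from the value recursion in Equation~\ref{eq:val} and show that agent $j$'s value depends on agent $i$'s policy \emph{only} through aggregate quantities that are insensitive to a single agent's choice. First I would observe that in Equation~\ref{eq:val} the policy of agent $i$ enters $v_{jz}(s,\pi_{jz},\bm{\pi}_{-j})$ in exactly two places: the reward term $\mc{R}_z(s,\phi^{\pi_{jz}(s)}(s,\bm{a}))$ and the transition kernel $\mc{T}(s'|s,\bm{a})$. Both of these are functions of the joint action $\bm{a}$ only through the mass vector $\phi(s,\bm{a})$, which in turn is assembled from the per-zone, per-action masses $f^a_z(s)=\int_{k\in\mc{N}^s_z}\mathbbm{1}_{(act(k)=a)}\,dm(k)$. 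Hence it suffices to show that replacing $\pi_i$ by $\pi'_i$ leaves every $f^a_z(s)$ unchanged.

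Next I would isolate agent $i$'s contribution to that integral. Writing $\tilde f^a_z(s)$ for the mass under the perturbed policy and using countable additivity of the non-atomic measure $m$, the difference is $f^a_z(s)-\tilde f^a_z(s)=\int_{k\in\{i\}}\big(\mathbbm{1}_{(act(k)=a)}-\mathbbm{1}_{(act'(k)=a)}\big)\,dm(k)$, an integral over the singleton $\{i\}$. Since $m$ is non-atomic, $m(\{i\})=0$, so $\{i\}$ is a null set and this integral vanishes; therefore $\tilde f^a_z(s)=f^a_z(s)$ for every action $a$ and zone $z$. Consequently $\phi(s,\bm{a})$, the immediate reward $\mc{R}_z$, and the transition distribution $\mc{T}(\cdot|s,\bm{a})$ are all identical under $\bm{\pi}_{-j}$ and $\bm{\pi}'_{-j}$.

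The last step, which I expect to be the main obstacle, is closing the recursion: equality of the one-step reward and transition does not by itself give equality of the infinite-horizon value, since the future terms $v_{jz'}(s',\pi_{jz'},\bm{\pi}_{-j})$ inside the integral still reference agent $i$'s policy. My plan is to note that the same null-set argument applies verbatim at \emph{every} reachable global state $s'$, so the perturbation leaves the reward and transition unchanged uniformly across the state space. The value function is then the unique fixed point of the associated contraction (for $\gamma<1$), and because both $v_{jz}(\cdot,\cdot,\bm{\pi}_{-j})$ and $v_{jz}(\cdot,\cdot,\bm{\pi}'_{-j})$ satisfy the \emph{same} Bellman recursion, uniqueness forces them to coincide. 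The only genuinely delicate point is the measure-theoretic justification for extracting agent $i$'s contribution via countable additivity together with the non-atomicity hypothesis $m(\{i\})=0$; once that is in hand, the remainder is a standard Bellman-uniqueness argument.
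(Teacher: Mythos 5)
Your proposal follows essentially the same route as the paper's own proof: both isolate agent $i$'s contribution to the per-zone, per-action mass $f^a_z(s)$ via countable additivity, invoke non-atomicity to conclude $m(\{i\})=0$ so the integral over the singleton vanishes, and then conclude that reward, transition, and hence value are unaffected. Your explicit closing of the recursion via uniqueness of the Bellman fixed point is slightly more careful than the paper, which simply asserts that the ``future value'' component is unchanged, but it is a refinement of the same argument rather than a different one.
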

\begin{proof}

As can be observed in the value function equation (Equation 1 in the main paper), change of policy for agent $i$ can have impact on agent $j$'s value, $v_{jz}(s, \pi_{jz}, \bm{\pi}_{-j})$, due to the following three components: \\
\noindent immediate reward, i.e., ${\cal R}_z(s,\phi^{\pi_{jz}(s)}(s,\mathbf{a}))$ ; or\\
\noindent transition function, i.e., ${\cal T}(s'|s,\bm{a})$; or \\
\noindent future value, $v_{jz'}(s',\pi_{jz'}, \bm{\pi}_{-j})$

All  three components are dependent on one common summary factor, which is the mass of agents taking a certain action ${a}$ in state $s$ with zone $z$.  We will refer to this as $\tilde{f}_{{z}}^{a}(s)$ (we are using $\tilde{f}$ to indicate that it is a change  from original number of agents, $f$). Due to change in policy of $i$, the number of agents taking an action ${a}$ in a zone ${z}$ can either increase or decrease. Without loss of generality, let us assume agent $i$ is using a deterministic policy\footnote{If agent $i$ employs a stochastic policy, the mass argument used in this proof is still applicable.}.

If change in policy moves agent $i$ away from zone ${z}$, then the new mass of agents taking action ${a}$ in zone $z$ is  $$\tilde{f}^{{a}}_{{z}}(s) = \int_{k \in \mc{N}^s_z \setminus {i}} \mathbbm{1}_{(act(k)={a})} dm(k)$$

Since $f$ is primarily mass of agents, which is a Lebesgue measure and using the \textit{countable additivity} property of Lebesgue measure \cite{bogachev2007measure,hartman2014theory}, we have:
{\small  \begin{align*}
 \tilde{f}^{a}_{z}(s)  &= \int_{k \in {\mc{N}^s_z}} \mathbbm{1}_{(act(k)=a)} dm(k) -   \int_{k \in \{i\}} \mathbbm{1}_{(act(k)=a)} dm(k)
 \end{align*}
 }
 Similarly if change in agent $i$ policy implies moving into zone ${z}$, the new value of number of agents taking action ${a}$ will be:
{\small  \begin{align*}
 \tilde{f}^{a}_{z}(s)  &= \int_{k \in {\mc{N}^s_z}} \mathbbm{1}_{(act(k)=a)} dm(k) +   \int_{k \in \{i\}} \mathbbm{1}_{(act(k)=a)} dm(k)
 \end{align*}
 }

\textup{Since integral at a point in continuous space is 0 and mass measure is non-atomic, so $\{i\}$ is a null set and $m(\{i\})=0$, and hence, 
 $\tilde{f}^{a}_{z}(s)= \int_{k \in {\mc{Z}^s_z}} \mathbbm{1}_{(act(k)=a)} dm(k)$.
Since $f_{z}^{a}(s) = \tilde{f}^{a}_{z}(s)$ for any given $z$ and $a$, none of the three components (immediate reward, transition function and future value) change. } 
\end{proof}

Based on Proposition~\ref{agent-value}, we now show that at Nash Equilibrium for SNCG, values of agents that are in same local state are equal.  A joint policy $\bm{\pi}$ is a Nash equilibrium if for all $z \in \mc{Z}$ and for all $i \in \mc{N}^s_z$, there is no incentive for anyone to deviate unilaterally, i.e. $
v_{iz}(s, \pi_{iz}, \bm{\pi}_{-i}) \ge v_{iz}(s, \pi'_{iz}, \bm{\pi}_{-i}) \forall s \in \mc{S}, \forall i \in \mc{N}^s_z, \forall z \in \mc{Z}, \forall \pi_{iz}, \pi'_{iz} \in \Pi_z 
$

\begin{prop} 
Values of agents present in a local state are equal at equilibrium, i.e.,
{\small \begin{align}
v_{iz}(s, \pi^*_{iz}, \bm{\pi}^*_{-i}) = v_{jz}(s, \pi^*_{jz}, \bm{\pi}^*_{-j}), \forall s \in \mc{S}, \forall i,j \in \mc{N}^s_z, \nonumber \\ \forall z \in \mc{Z}, \forall \pi^*_{iz}(s), \pi^*_{jz}(s) \in \Pi_z
\end{align}}
where superscript * denotes equilibrium policies.
\end{prop}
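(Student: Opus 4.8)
The plan is to combine the single-agent invariance established in Proposition~\ref{agent-value} with the unilateral-deviation (Nash) inequalities, exploiting the fact that two agents sharing a local state belong to the same population type and hence draw from the same policy set $\Pi_z$. I would fix a global state $s$, a local state $z$, and two agents $i, j \in \mc{N}^s_z$. Because $i$ and $j$ are interchangeable members of the population in $z$, agent $i$ is free to imitate $j$'s equilibrium policy $\pi^*_{jz}$ and vice versa, so the definition of Nash equilibrium immediately yields the two best-response inequalities $v_{iz}(s, \pi^*_{iz}, \bm{\pi}^*_{-i}) \ge v_{iz}(s, \pi^*_{jz}, \bm{\pi}^*_{-i})$ and $v_{jz}(s, \pi^*_{jz}, \bm{\pi}^*_{-j}) \ge v_{jz}(s, \pi^*_{iz}, \bm{\pi}^*_{-j})$.

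The crux of the argument, and the step I expect to be the main obstacle, is establishing the two \emph{crossing} equalities $v_{iz}(s, \pi^*_{jz}, \bm{\pi}^*_{-i}) = v_{jz}(s, \pi^*_{jz}, \bm{\pi}^*_{-j})$ and $v_{jz}(s, \pi^*_{iz}, \bm{\pi}^*_{-j}) = v_{iz}(s, \pi^*_{iz}, \bm{\pi}^*_{-i})$, i.e. that an agent adopting another agent's policy in the same local state attains exactly that agent's value. I would justify this through Proposition~\ref{agent-value}: when $i$ switches to $\pi^*_{jz}$, only one infinitesimal agent relocates, so by countable additivity of the non-atomic measure the summary mass $\tilde f^a_z(s)$, and therefore $\phi(s,\bm{a})$, is unchanged from its equilibrium value. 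Since the reward $\mc{R}_z(s, \phi^{\cdot}(s,\bm{a}))$ and the transition $\mc{T}(s'\mid s,\bm{a})$ in Equation~\ref{eq:val} depend on the configuration only through $\phi$, the immediate reward and transition kernel seen by $i$ (now playing $\pi^*_{jz}$) coincide with those seen by $j$ (also playing $\pi^*_{jz}$). The values then agree term-by-term in the Bellman recursion, and an induction on the recursive unfolding of Equation~\ref{eq:val}, noting that every successor local state $z'$ is reached identically and the single-deviation argument recurses at each reached state, gives equality of the full discounted values. The delicate point is verifying that invariance of the summary factor propagates through every stage rather than just the first; this holds precisely because Proposition~\ref{agent-value} applies verbatim at each global state reached along the trajectory.

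Finally, I would chain the pieces together. Substituting the crossing equalities into the two Nash inequalities yields $v_{iz}(s,\pi^*_{iz},\bm{\pi}^*_{-i}) \ge v_{jz}(s,\pi^*_{jz},\bm{\pi}^*_{-j})$ from the first and $v_{jz}(s,\pi^*_{jz},\bm{\pi}^*_{-j}) \ge v_{iz}(s,\pi^*_{iz},\bm{\pi}^*_{-i})$ from the second, holding simultaneously and thus forcing equality. Since $i, j \in \mc{N}^s_z$, $s \in \mc{S}$, and $z \in \mc{Z}$ were arbitrary, the equality of values holds for all agents sharing a local state, which completes the proof.
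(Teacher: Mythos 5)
Your proposal is correct and follows essentially the same route as the paper's own proof: the two unilateral-deviation inequalities from the Nash condition, combined with the anonymity of agents in a local state (justified via Proposition~\ref{agent-value} and the dependence of rewards and transitions only on aggregate mass), chained to force equality. Your ``crossing equalities'' are simply a more explicit rendering of the paper's assertion that $i$ and $j$ are interchangeable, which is if anything a slight gain in rigor over the paper's phrasing.
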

\begin{proof}
In the proof of Proposition~\ref{agent-value}, we showed that adding or subtracting one agent from a local state does not change other agent's values, as contribution of one agent is infinitesimal. Thus,
\begin{align}
v_{iz}(s, \pi_{iz}^{*}, \bm{\pi}^{*}_{-i}) &= v_{iz}(s, \pi_{iz}^{*}, \bm{\pi}^{*}) \textbf{ and also } \nonumber \\
v_{iz}(s, \pi_{iz}^{'*}, \bm{\pi}_{-i}) &= v_{iz}(s, \pi_{iz}^{'*}, \bm{\pi}^{*}) 
\end{align}
From equilibrium condition (value of equilibrium strategy is greater than or equal to value of any other strategy for each agent in every state) employed in NashQ learning~\cite{hu2003nash} and the above results from Proposition~\ref{agent-value}, we have (accounting for local states):
\begin{align}
v_{iz}(s, \pi^{*}_{iz}, \bm{\pi}^*) &\geq v_{iz}(s, \pi_{iz}, \bm{\pi}^*), \forall \pi_{iz} \neq \pi^{*}_{iz} \nonumber\\
v_{jz}(s, \pi^{*}_{jz}, \bm{\pi}^*) &\geq v_{jz}(s, \pi_{jz}, \bm{\pi}^*), \forall \pi_{jz} \neq \pi^{*}_{jz} \nonumber
\end{align}
Since the set of policies available to each agent are the same in each local state, i.e. one of the possible values of $\pi_{iz}$ is $\pi^*_{jz}$ and similarly one of the possible values of $\pi_{jz}$ is $\pi^*_{iz}$. Hence
\begin{align}
v_{iz}(s, \pi^{*}_{iz}, \bm{\pi}^*) &\geq v_{iz}(s, \pi^*_{jz}, \bm{\pi}^*) \nonumber\\
v_{jz}(s, \pi^{*}_{jz}, \bm{\pi}^*) &\geq v_{jz}(s, \pi^*_{iz}, \bm{\pi}^*) \nonumber
\end{align}
Since agents do not have specific identity (i.e., rewards are dependent on zone, transitions are on joint state), $i$ and $j$ are interchangeable. 
The above set of equations are only feasible if agent $i$ and $j$ in same local state have equal values, i.e.,
\begin{align}
v_{iz}(s, \pi^{*}_{iz}, \bm{\pi}^*) &= v_{jz}(s, \pi^{*}_{jz}, \bm{\pi}^*) \nonumber\\
\intertext{Therefore:}
v_{iz}(s, \pi^{*}_{iz}, \bm{\pi}^*_{-i}) &= v_{jz}(s, \pi^{*}_{jz}, \bm{\pi}^*_{-j}) \nonumber
\end{align}
\end{proof}
\begin{corollary}
For problems with multiple types of agents, values of same type of agents are equal in a local state at equilibrium for non-atomic case. 
\end{corollary}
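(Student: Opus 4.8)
The plan is to reduce this multi-type claim to the two propositions already established, restricting every comparison to pairs of agents that share a common type. First I would observe that the non-atomic mass argument underlying Proposition~\ref{agent-value} never used anything about the type of the agent being removed or inserted: since $m(\{i\})=0$ for every single-agent set, the summary statistics $f^a_z(s)$ and hence $\phi(s,\bm{a})$ are unchanged by adding or deleting any one agent, irrespective of its type. Consequently, for same-type agents $i$ and $j$ in a local state $z$, the identities $v_{iz}(s,\pi^*_{iz},\bm{\pi}^*_{-i}) = v_{iz}(s,\pi^*_{iz},\bm{\pi}^*)$ and its analogue for $j$ continue to hold verbatim.

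Next I would make the type structure explicit. Each local state $z$ now contains agents of several types, and I take $\Pi_z^{\tau}$ to be the common policy set available to every agent of type $\tau$ in $z$. The essential point is that two agents of the \emph{same} type in the same local state are interchangeable: they draw policies from the identical set $\Pi_z^{\tau}$, and by assumption the reward $\mc{R}_z$ and transition $\mc{T}$ depend only on type, zone, and the mass distribution $\phi(s,\bm{a})$, not on agent identity. Agents of different types are not interchangeable, which is exactly why the corollary is phrased for same-type agents alone.

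I would then run the Proposition~\ref{equal-value} argument within a single type. Fix same-type agents $i,j$ of type $\tau$ in zone $z$. Since $\pi^*_{jz}\in\Pi_z^{\tau}$ is a feasible deviation for $i$, the equilibrium best-response condition gives $v_{iz}(s,\pi^*_{iz},\bm{\pi}^*)\ge v_{iz}(s,\pi^*_{jz},\bm{\pi}^*)$, and symmetrically $v_{jz}(s,\pi^*_{jz},\bm{\pi}^*)\ge v_{jz}(s,\pi^*_{iz},\bm{\pi}^*)$. Interchangeability lets me identify $v_{iz}(s,\pi^*_{jz},\bm{\pi}^*)$ with $v_{jz}(s,\pi^*_{jz},\bm{\pi}^*)$ and $v_{jz}(s,\pi^*_{iz},\bm{\pi}^*)$ with $v_{iz}(s,\pi^*_{iz},\bm{\pi}^*)$, so the two inequalities can hold simultaneously only with equality, yielding $v_{iz}(s,\pi^*_{iz},\bm{\pi}^*_{-i}) = v_{jz}(s,\pi^*_{jz},\bm{\pi}^*_{-j})$.

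The main obstacle I anticipate is making the interchangeability step fully rigorous rather than merely intuitive. Unlike the single-type case, here I must argue that when agent $i$ adopts $j$'s equilibrium policy it faces precisely the environment that $j$ faced, namely the same mass flows $\phi$ and the same type-indexed reward, so that their values coincide. This is where non-atomicity (Proposition~\ref{agent-value}) does the work, guaranteeing that swapping two infinitesimal same-type agents leaves every mass statistic fixed; the argument would break for agents of different types, because the reward functions $\mc{R}_z$ themselves differ, and it would break for atomic agents, because the swap would perturb $\phi$.
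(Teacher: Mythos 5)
Your proposal is correct and follows essentially the same route the paper intends: the corollary is left as an immediate consequence of Propositions~1 and~2, obtained by rerunning the Proposition~2 argument with the deviation set restricted to the common policy set of a single type, exactly as you do. Your explicit remarks on why interchangeability holds only within a type (shared policy set, type-indexed reward, and mass statistics unchanged by swapping null-mass agents) merely spell out what the paper leaves implicit.
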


In non-atomic case, individual agents have zero mass and we have shown here that values of agents with same local states will be equal at equilibrium. We now move  onto domains with large number of agents but not completely non-atomic. Since agents have non-zero mass, the proofs above do not hold. 

\subsection{Nearly Non-atomic}
In aggregation systems, where there are large number of agents but each agent has a small mass, we provide proofs for two reward cases separately. 
\begin{prop}
Consider two agents, $i$ and $j$ in zone $z$, and let $\bm{\pi}^* = (\pi_1^*,\cdots,\pi_i^*, \cdots, \pi_j^*, \cdots)$ be the equilibrium policy. For \textbf{R1} setting, we have:
$$ v_{iz}(s, \pi_{i}^*, \bm{\pi}^*_{-i}) = v_{jz}(s, \pi_{j}^*, \bm{\pi}^*_{-j})$$
\end{prop}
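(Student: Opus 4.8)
The plan is to evaluate the \textbf{R1} value recursion for $i$ and $j$ at the single profile $\bm{\pi}^*$ and peel it apart term by term. The first thing I would stress is that, although the two quantities are written with the truncated profiles $\bm{\pi}^*_{-i}$ and $\bm{\pi}^*_{-j}$, the full profile being evaluated is $\bm{\pi}^*$ in both cases, so the induced joint mass-action $\bm{a}$ at state $s$ --- and hence the aggregate $\phi(s,\bm{a})$ --- is literally the same object on the two sides. Under \textbf{R1} the immediate reward $\mc{R}_z(s,\phi(s,\bm{a}))$ is a function of the zone $z$ and the aggregate alone, not of the agent's own action, so the reward terms of $v_{iz}$ and $v_{jz}$ coincide; and since $\mc{T}(s'\mid s,\bm{a})$ depends only on $(s,\bm{a})$, the global transition kernel is identical too. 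Consequently the two values can differ only through the discounted continuation $\gamma\int_{s'}\mc{T}(s'\mid s,\bm{a})\,v_{\cdot z'}(s')\,ds'$, where $i$ and $j$ may a priori visit different downstream zones according to $\pi^*_{iz}$ versus $\pi^*_{jz}$.

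To control the continuation I would introduce a representative value $V_z(s)$ that depends only on the zone and the global state, and prove $v_{iz}(s,\pi^*_{iz},\bm{\pi}^*_{-i}) = V_z(s)$ for every agent by induction on a truncated horizon $H$. The base case $H=0$ is exactly the equality of immediate rewards established above; the inductive step combines this common reward with a continuation expectation of the identity-free quantity $V_{z'}(s')$, and the limit $H\to\infty$ is routine because $\gamma<1$ makes the evaluation operator a contraction with a unique fixed point. The entire mathematical content is therefore concentrated in the inductive step: showing that the expected continuation is the same number for $i$ and $j$.

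That step is the main obstacle, and it is precisely what the sketch's word ``recursively'' conceals, because when $\pi^*_{iz}\neq\pi^*_{jz}$ the two agents generically traverse different zone sequences and could in principle accrue different \textbf{R1} rewards. I would close it with the no-deviation argument already used for the non-atomic Proposition: since $i$ and $j$ share the same action set $\mc{A}_z$ and, under \textbf{R1}, face an environment insensitive to individual identity, the policy $\pi^*_{jz}$ is feasible for $i$ and produces the same reward stream that $j$ earns, so mimicking gives $v_{iz}(s,\pi^*_{jz},\bm{\pi}^*_{-i}) = v_{jz}(s,\pi^*_{jz},\bm{\pi}^*_{-j})$; combined with the equilibrium inequalities $v_{iz}(s,\pi^*_{iz},\bm{\pi}^*_{-i})\ge v_{iz}(s,\pi^*_{jz},\bm{\pi}^*_{-i})$ and its symmetric counterpart for $j$, this sandwiches the two values to equality. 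I would also flag the near-non-atomic caveat explicitly: a lone agent switching its policy perturbs $\phi$ only by its own small mass, which is why the mimicking equality is exact in the non-atomic reading and only approximate for genuinely non-zero mass --- and why the comparison of values must be taken at the undisturbed equilibrium profile, where $\bm{a}$ and $\phi(s,\bm{a})$ are truly common to both sides, rather than along a deviation.
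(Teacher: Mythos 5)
Your first half is exactly the paper's argument: both values are evaluated at the single profile $\bm{\pi}^*$, so $\bm{a}$ and $\phi(s,\bm{a})$ coincide, the \textbf{R1} reward $R_z(s,\phi(s,\bm{a}))$ and the kernel $\mc{T}(s'|s,\bm{a})$ are common to both sides, and the paper then closes by induction on the horizon, declaring the continuation terms equal by the inductive hypothesis. You are also right that ``recursively'' hides the only nontrivial step: if $\pi^*_{iz}\neq\pi^*_{jz}$ the two agents can land in different zones, and an inductive hypothesis that equates values of agents \emph{in the same zone} does not by itself equate $v_{iz'_i}(s')$ with $v_{jz'_j}(s')$ when $z'_i\neq z'_j$. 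The paper's own proof does not address this.

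The problem is your repair. The mimicking identity $v_{iz}(s,\pi^*_{jz},\bm{\pi}^*_{-i}) = v_{jz}(s,\pi^*_{jz},\bm{\pi}^*_{-j})$ requires that when $i$ switches to $\pi^*_{jz}$ it faces exactly the environment $j$ faces at equilibrium; but this proposition is stated for agents of \emph{non-zero} mass, so the deviation shifts $i$'s mass between actions, perturbs $\bm{a}$ and hence $\phi(s,\bm{a})$ and $\mc{T}(\cdot\,|\,s,\bm{a})$, and the two sides of the mimicking identity are evaluated at different aggregates. That identity is exact only in the non-atomic case --- which is Proposition 1, and the paper explicitly notes those arguments do not carry over here. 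Combining exact equilibrium inequalities with an only-approximate mimicking equality sandwiches the two values to within an error of order the agent's mass per step, i.e., to an $\epsilon/(1-\gamma)$-type bound of the kind the paper proves for \textbf{R2}, not to the exact equality claimed for \textbf{R1}. Your closing sentence concedes this tension but does not escape it: the sandwich is intrinsically a statement about a deviation, while exactness is only available at the undisturbed profile. To recover the exact statement you should close the continuation step without deviating at all --- for instance by strengthening the inductive hypothesis to ``the $t$-step value is a function $V^t_z(s)$ of the zone and global state only, simultaneously for all zones,'' which is what the \textbf{R1} structure (reward and transition determined by $(z,s,\bm{a})$ alone, independent of identity) supports --- and then handle the residual issue of $i$ and $j$ reaching different zones via the equilibrium support condition (every action in the support of $\pi^*_{iz}$ attains the common maximal continuation value), rather than via mimicking.
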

\begin{proof}
We show this using mathematical induction on the horizon, $t$. 

\noindent \textbf{For t = 1}:
$$v_{iz}^t(s, \pi_i^*, \bm{\pi}_{-i}^*) = R_z(s, \pi(s,\bm{a}))$$
$$v_{jz}^t(s, \pi_j^*, \bm{\pi}_{-j}^*) = R_z(s, \pi(s,\tilde{\bm{a}}))$$
It should be noted that
$$(\pi_i^*, \bm{\pi}_{-i}^*) = (\pi_j^*, \bm{\pi}_{-j}^*)$$
Therefore, $\phi(s,\bm{a}) = \phi(s, \tilde{\bm{a}})$ and hence 
$$v_{iz}^t(s, \pi_{i}^*, \bm{\pi}^*_{-i}) = v_{jz}^t(s, \pi_{j}^*, \bm{\pi}^*_{-j}), \forall s$$ 
\end{proof}

Let us assume that it holds for $t = m$, i.e.,
$$v_{iz}^t(s, \pi_{i}^*, \bm{\pi}_{-i}^*)) = v_{jz}^t(s, \pi_{j}^*, \bm{\pi}_{-j}^*)) , \forall s$$

For $t = m + 1$, we have
\begin{align} 
v_{iz}^{t+1}(s, \pi_{i}^*, \bm{\pi}_{-i}^*) &= R_{z}(s, \phi(s,\bm{a})) + \gamma \sum_{s'} {\cal T}(s'|s,\bm{a}) v_{iz}^t(s, \pi_{i}^*, \bm{\pi}_{-i}^*) \nonumber\\
v_{jz}^{t+1}(s, \pi_{j}^*, \bm{\pi}_{-j}^*) &= R_{z}(s, \phi(s,\tilde{\bm{a}})) + \gamma \sum_{s'} {\cal T}(s'|s,\tilde{\bm{a}}) v_{jz}^t(s, \pi_{j}^*, \bm{\pi}_{-j}^*) \nonumber
\intertext{Similar to t = 1, we have $(\pi_i^*, \bm{\pi}_{-i}^*) = (\pi_j^*, \bm{\pi}_{-j}^*)$ and hence $\bm{a} = \tilde{\bm{a}}$ and $\phi(s,\bm{a}) = \phi(s, \tilde{\bm{a}})$. Given this and our assumption above, we have: }
v_{iz}^{m+1}(s, \pi_{i}^*, \bm{\pi}^*_{-i}) &= v_{jz}^{m+1}(s, \pi_{j}^*, \bm{\pi}^*_{-j})
\end{align}
For infinite horizon problems, we can just write this as 
$$ v_{iz}(s, \pi_{i}^*, \bm{\pi}^*_{-i}) = v_{jz}(s, \pi_{j}^*, \bm{\pi}^*_{-j}) $$

$\hfill \qedsymbol$

\noindent For reward case, \textbf{R2} and nearly non-atomic agents, we are unable to provide a proof of value equivalence for agents in same local state. However, we can provide an upper bound on the difference value of any two agents in a given zone $z$. 

Let assume that for nearly non-atomic case, the maximum impact an agent can have in terms of reward at a time step is bounded by $\epsilon$, i.e., 
\begin{align}
\max_{a,b} \left[ R_z(s, \phi^{a}(s,\bm{a})) - R_z(s, \phi^b(s, \bm{a})) \right] \leq \epsilon \quad \quad \forall a,b \in \mc{A}_z, \forall s \in \mc{S}, \forall z \in \mc{Z}, \forall \bm{a} \in \mc{A} \label{eq:nearlynonatomic}
\end{align}
This is a fair assumption, since there are a large number of agents in the nearly non-atomic case and each agent has a minimal impact. 

For an equilibrium policy,$\pi^* = (\pi_1^*, \ldots,\pi_i^*, \ldots, \pi_j^*, \ldots)$, the difference in values of any two agents, $i$ and $j$ in the same zone $z$ is given 
\begin{align}
    v_{iz}(s, \pi_{i}^*, \bm{\pi}^*_{-i}) - v_{jz}(s, \pi_{j}^*, \bm{\pi}^*_{-j}) &= R_z(s, \phi^{\pi_{iz}^*(s)}(s, \bm{a})) - R_z(s, \phi^{\pi_{jz}^*(s)}(s, \bm{a})) \nonumber \\ &+ \gamma \sum_{s'} {\cal T}(s'|s,\bm{a}) \left [ v_{iz'}(s', \pi_{i}^*, \bm{\pi}^*_{-i}) - v_{jz'}(s', \pi_{j}^*, \bm{\pi}^*_{-j}) \right] \nonumber\\
\intertext{It should be noted that $(\pi_i^*, \bm{\pi}_{-i}^*) = (\pi_j^*, \bm{\pi}_{-j}^*)$ and therefore the transition function values, ${\cal T}(s'|s,a)$ will be the same for both agents. Next we substitute Equation~\ref{eq:nearlynonatomic} and use the sum of geometric progression to obtain (for infinite horizon case)}
&v_{iz}(s, \pi_{i}^*, \bm{\pi}^*_{-i}) - v_{jz}(s, \pi_{j}^*, \bm{\pi}^*_{-j}) \leq \frac{\epsilon}{1 - \gamma}
\end{align}
For the finite horizon case, it will be $T*\epsilon$ (assuming there is no discount factor).

\section{Local Variance minimizing Multi-agent Q-learning, LVMQ}
Algorithm \ref{algo-VMQ} provides detailed steps of LVMQ.

\begin{algorithm} [h]
\caption{LVMQ}
\label{algo-VMQ}
\begin{algorithmic}[1]
	\STATE {Initialize replay buffer $\mc{J}$, action-variance network $\sigma(s, \bm{a};\theta_\sigma)$, policy network $\mu(s;\theta_{\mu})$}
\STATE {Initialize replay buffer $\mc{J}_i$, action-value network $Q_{i}(s,z,a; \theta_i)$ and corresponding target network with parameter $\theta^-_i$ for all the individual agents $i$}
\WHILE{not converged}
\FOR{ $z \in \mc{Z}$}
	\FOR{$i \in \mc{N}^s_z$}
	\STATE{compute value of $i$, $v_{iz} = \underset{a}{\text{max}} Q_{i}(s,z,a;\theta_i)$}
	\ENDFOR
\STATE{Compute $\nu_z$, variance in $v_{iz}$ values for $i \in \mc{N}^s_z$}
\ENDFOR
\STATE{Compute mean variance $\nu = \dfrac{1}{|\mc{Z}|} \sum_{z \in \mc{Z}} \nu_z$}
\STATE{Compute suggested joint action by the central entity $\bm{a}^c \leftarrow \mu(s,\theta^{\mu})$.\label{suggested_act}}
	\FOR {all $z \in \mc{Z}$ and for all agent $i \in \mc{N}^s_z$ }
	\STATE{with probability $\epsilon_1$, \\ 
	$\quad \quad a_i \leftarrow \text{sample from }\bm{a}^c_z $ \\
	with remaining probability $1-\epsilon_1$ \\
	$\quad \quad a_i \leftarrow  \epsilon_2\text{-greedy} (Q_{i})  $\label{indi-exploration}}
		\STATE{Perform action $a_i$ and observe immediate reward $r_i$ and next local state $z'$}
	\ENDFOR
	\STATE{Compute true joint action $\bm{a}$ and observe next state $s'$}
	 \STATE{Store transition $(s, \nu, \bm{a})$ in  $\mc{J}$ and respective transitions $(s, z_i, a_i, r_i, s', z'_i)$ in $\mc{J}_i$ for all agents $i$}
	\STATE{Periodically update the network parameters by minimizing the loss functions 
	\begin{align*}
	&\mc{L}_{\theta_{\sigma}} = \mathbb{E}_{(s,\nu,\bm{a}) \sim \mathcal{J}}\Big[ \Big( \nu  - \sigma( s, \bm{a}; \theta_\sigma)\Big)^2\Big]   \\
&\mc{L}_{\theta_{\mu}} = \mathbb{E}_{(s) \sim \mathcal{J}}\Big[ -\nabla_{\theta_{\mu}} \mu(s;\theta_{\mu}) \nabla_{\bm{a}} \sigma(s, \bm{a};\theta_\sigma)|_{\bm{a} = \mu(s;\theta_{\mu})} \Big]  \\
&\mc{L}_{\theta_i} = \mathbb{E}_{(s,z,a, r, s', z') \sim \mc{J}_i}\Big[ \big(r + \gamma \cdot \max_{a'} Q_{i}(s', z', a'; \theta^{-}_i) \nonumber - Q_{i}(s, z, a; \theta_i) \big)^2\Big] 
\end{align*}	 }
\STATE{Periodically update the target network parameters}
\ENDWHILE
\end{algorithmic}
\end{algorithm} 

\section{Hyperparameters}
Our neural network consists of one hidden layer with 256 nodes. We also use dropout layer and layer normalization to prevent the network from overfitting. 
We ran our experiments on a 64-bit Ubuntu machine with 256G memory and used tensorflow for deep learning framework. We performed hyperparameter search for learning rate (1e-3 - 1e-5) and anticipatory parameter (0.1 - 0.8 for NFSP). The dropout rate was set to 50\%. We used Adam optimizer for all the experimental domains. Learning rate was set to 1e-5 for all the experiments. Our learning rate is smaller than the rates generally used. We believe having low learning rate helps in dealing with non-stationarity when there are large number of agents present.
We used 0, 1, 2022, 5e3, 1e5 as random seeds for 5 different runs of experiment. 

\section{Description of Simulator}
The real-world data set is from a taxi company in a major Asian city which we will share after the paper is published. Demands (for both real-world and synthetic data set) are generated with a \textit{time-to-live} value and the demand expires if it is not served within this time periods. Agents start at random zones and based on their learned policy move across the zones if no demand has been assigned to them. At every time step, the simulator assigns a trip probabilistically to the agents based on the number of agents present at the zone and the customer demand in that zone. Once a trip is assigned to an agent, it follows a fixed path to the destination and is no longer considered a candidate for a new trip. The agent starts following the learned policy only after it has completed the trip and is again ready to serve new trips. The fare of the trip minus the cost of traveling is used as the immediate reward for actions. The learning is episodic and an episode ends after every $T$ time steps.

\section{Discussion on number of agents}
LVMQ is motivated from the equilibrium property of having same value (i.e. zero variance) in a local state of SNCG. However, SNCG considers agents to be non-atomic. To check how many number of agents are required to consider an agent to be nearly non-atomic, we performed ablation study with different number of agents. Figure \ref{ablation} provides results when we simulated the taxi domain using real world dataset with 20, 100, 200 and 1000 agents. Though the average reward is still higher for LVMQ, the variance and standard deviation is relatively high when there are 20 agents (Figures \ref{compare-20} and \ref{var-20}). The values improve when 100 agents are used. It is evident from Figures \ref{compare-200}, \ref{var-200}, \ref{compare-1000} and \ref{var-1000} that LVMQ performs well when number of agent is around 200. Hence, we feel that we can approximate non-atomic behaviour when total number of agents are more than 200. 

\begin{figure*}[h]
\centering
\subfloat[20 agents \label{compare-20}]{\includegraphics[scale=0.2]{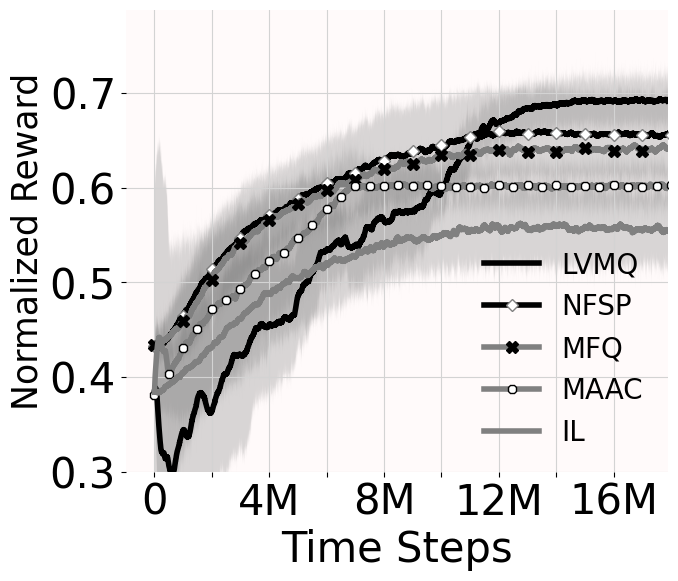}} 
\subfloat[100 agents \label{compare-100}]{\includegraphics[scale=0.2]{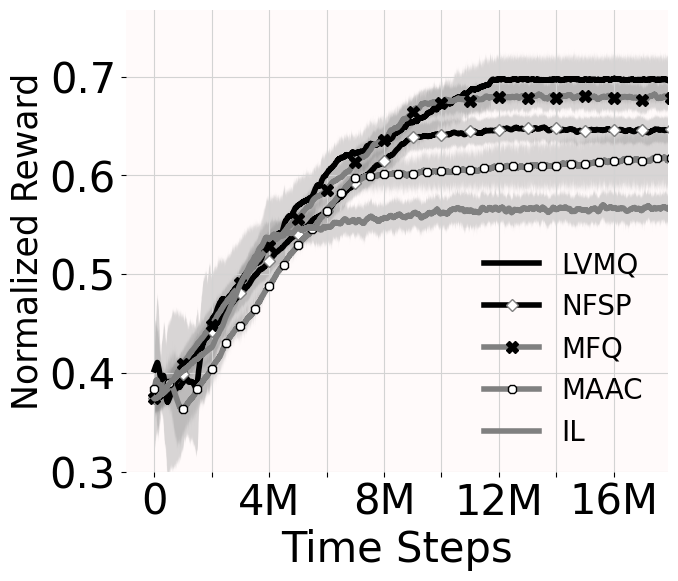}} 
\subfloat[200 agents \label{compare-200}]{\includegraphics[scale=0.2]{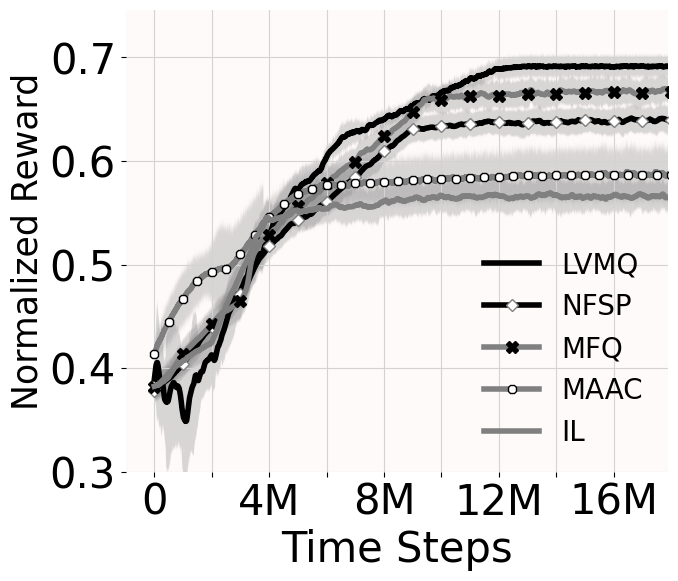}} 
\subfloat[1000 agents \label{compare-1000}]{\includegraphics[scale=0.2]{compare-rw.png}}\\
\subfloat[20 agents \label{var-20}]{\includegraphics[scale=0.2]{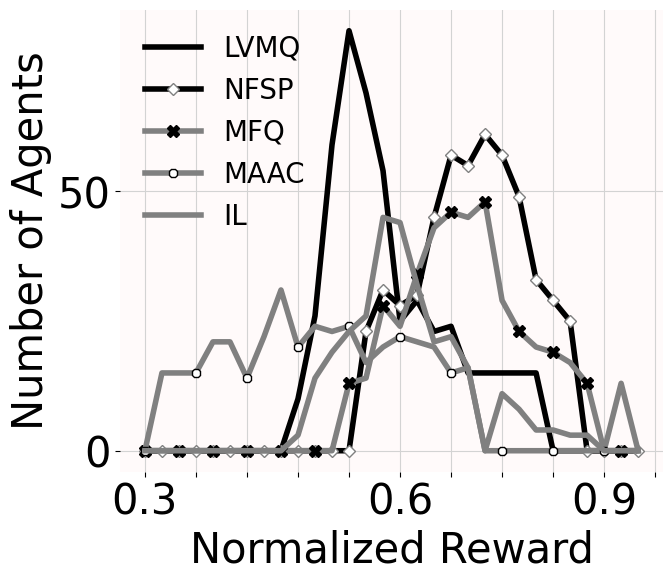}} 
\subfloat[100 agents \label{var-100}]{\includegraphics[scale=0.2]{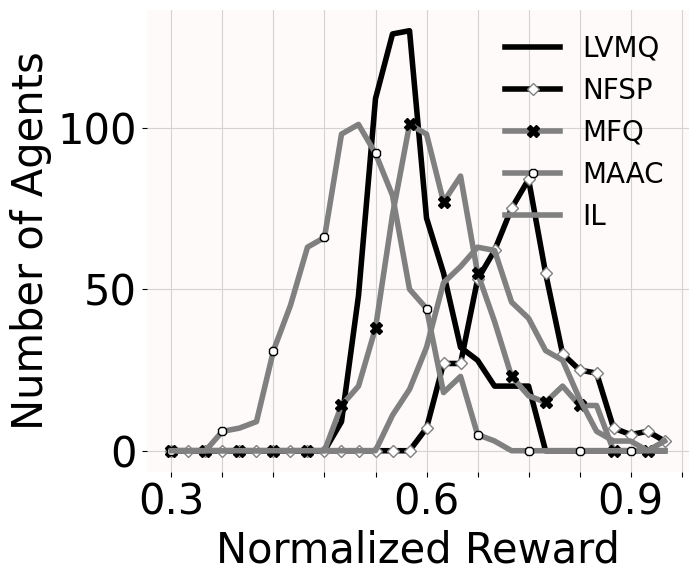}} 
\subfloat[200 agents \label{var-200}]{\includegraphics[scale=0.2]{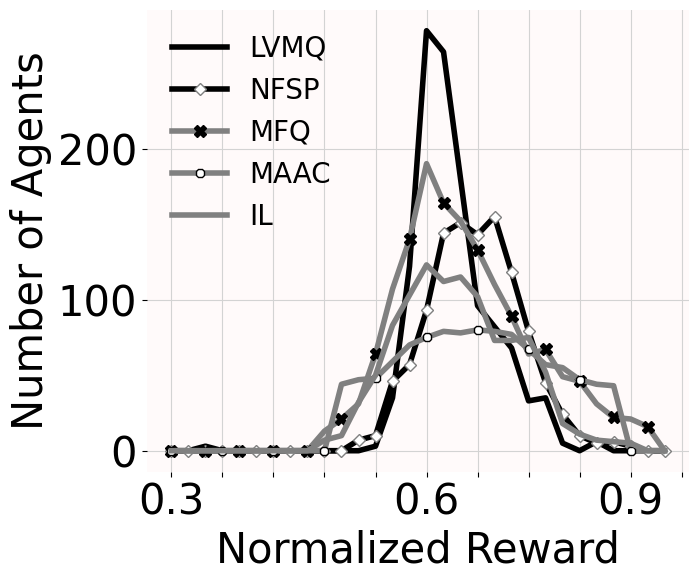}} 
\subfloat[1000 agents \label{var-1000}]{\includegraphics[scale=0.2]{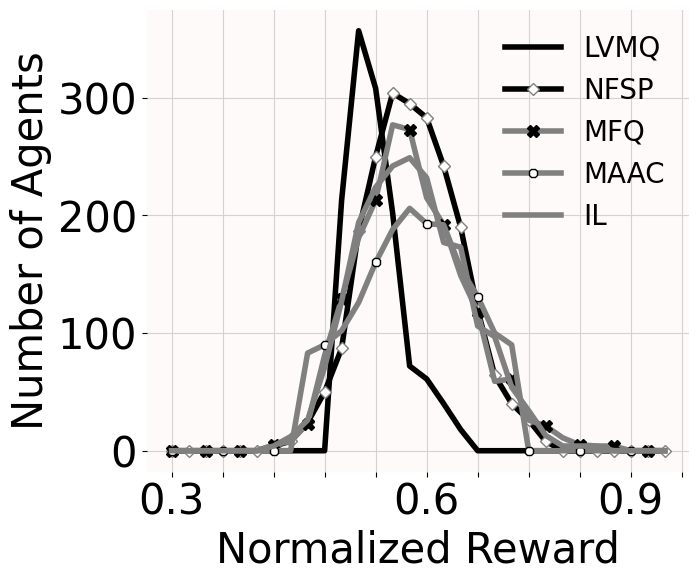}}
\caption{Performance of algorithms when 20, 100, 200 and 1000 agents are used for experiments. }
\label{ablation}
\end{figure*}

\section{Packet Routing}
\begin{figure}
  \centering
    \includegraphics[scale=0.35]{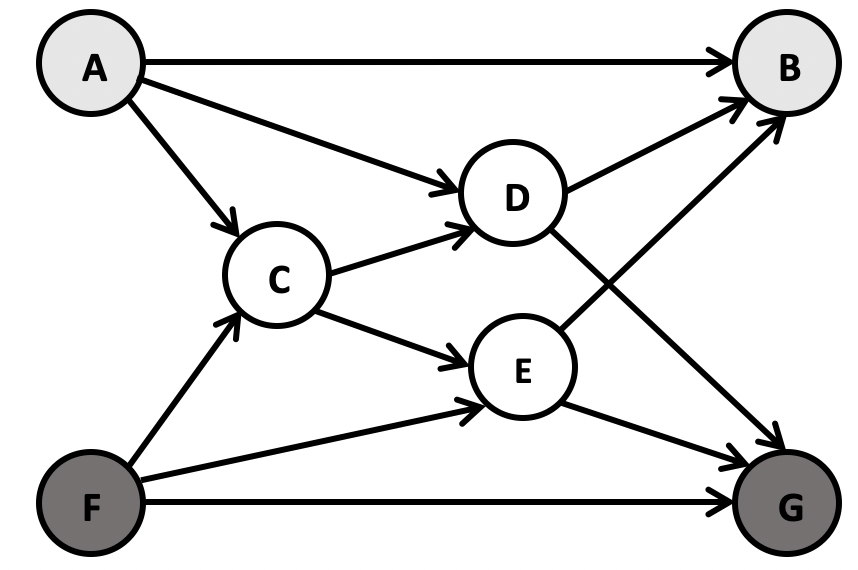}
  \caption{Routing network}
  \label{routing-new}
\end{figure}
Figure \ref{routing-new} shows another example of packet routing domain where two different populations of agents $\mc{N}_1$ and $\mc{N}_2$ of mass 0.5 each share the routing network. Population $\mc{N}_1$ sends packets from node A to node B and population $\mc{N}_2$ sends from node F to node G.  AB, ADB, ACDB and ACEB are the paths available to population $\mc{N}_1$, whereas paths FG, FEG, FCEG and FCDG are available to population $\mc{N}_2$. 

We now explain two examples where we used different cost functions to vary the complexity of learning. For example 1, the costs functions are such that both the populations do not share any edge at equilibrium. For the second example, the costs functions are selected in a way such that all the edges have non-zero mass of agents at equilibrium.
\subsubsection*{Example 1}
\begin{figure}
  \centering
    \includegraphics[scale=0.32]{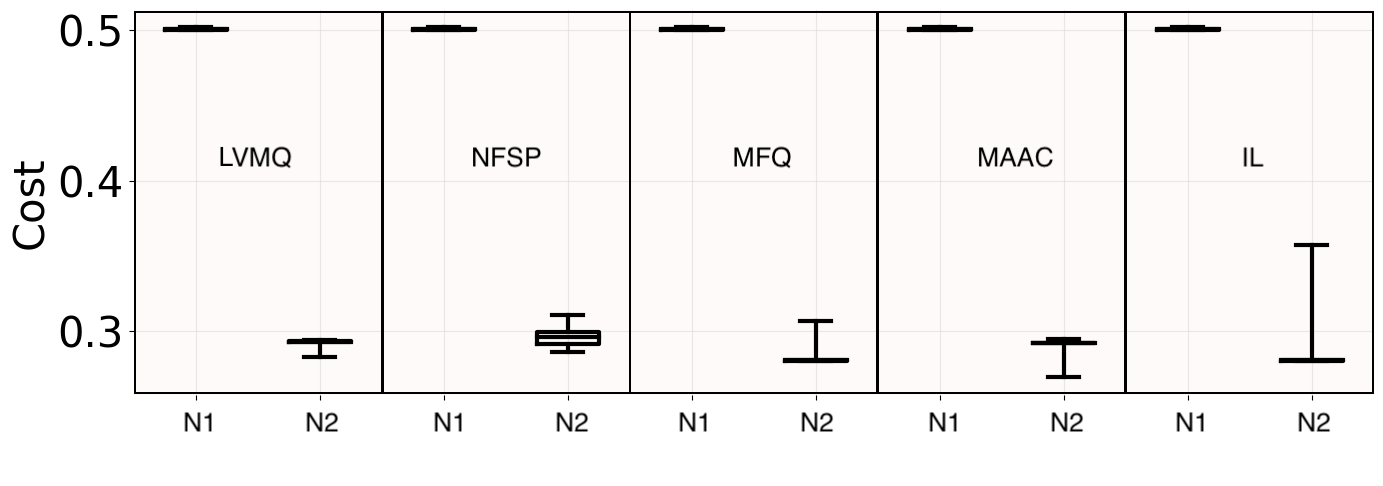}
  \caption{Variance in individual costs for example 1}
  \label{routing1}
\end{figure}
The costs functions are as follows
\begin{align*}
c_{AB}(\phi) &= \dfrac{\phi}{2} &c_{AC}(\phi) &= 1  &c_{AD}(\phi) &= \dfrac{\phi}{3} &c_{CD}(\phi) &= \dfrac{\phi}{4} \\ 
c_{CE}(\phi) &= \dfrac{1}{2}  &c_{DB}(\phi) &= \phi +1  &c_{DG}(\phi) &= \dfrac{\phi}{6} &c_{EB}(\phi)& = \dfrac{\phi}{5}\\
c_{EG}(\phi) &= \dfrac{\phi}{5} &c_{FC}(\phi) &= 2 \phi  &c_{FE}(\phi) &= \phi + \dfrac{1}{2} &c_{FG}(\phi) &= \dfrac{\phi}{3} 
\end{align*}
Please notice that the cost function of AB is significantly lower than the other paths available to population $\mc{N}_1$. We use $
((\pi^{AB}_1, \pi^{ADB}_1,\pi^{ACDB}_1, \pi^{ACEB}_1), (\pi^{EG}_2, \pi^{EFG}_2,\pi^{FCEG}_2, \pi^{FCDG}_2))$ to represent the policy where $\pi^p_i$ is the fraction of the population of type $i$ selecting path $p$. The exact equilibrium computed by solving LP (minimizing the potential function) is $((1. 0,0,0),(0.879,0,0,0.121))$ and the corresponding costs on each paths are as follows
$$
AB = 0.5, ADB = 1.0, ACDB = 2.0, ACEB=1.5,
FG=0.29, FEG=0.5, FCEG=0.74, FCDG=0.29
$$
We can see that costs on paths with zero mass (ex. ADB, ACDB, ACEB) are higher than the costs on paths with non-zero mass (AB) of agents. Also, costs on path with non-zero mass (ex. FG, FCDG) are equal at equilibrium.
\begin{table} [h]
\centering
\caption{Comparison of \textit{epsilon} values}
\begin{tabular}{ | c | c | c |}
\hline 
Method & \textit{epsilon} value for example 1 & \textit{epsilon} value for example 2 \\ \hline
LVMQ & 0.01 & 0.07 \\ \hline
NFSP & 0.02 &0.44\\ \hline
MFQ & 0.46 & 0.47 \\  \hline
MAAC & 0.03 & 0.57 \\  \hline
IL & 0.11 & 0.79\\  \hline
\end{tabular}
\label{eq-policy-sup}
\end{table}
\subsubsection*{Example 2}

Following are the cost functions
\begin{align*}
c_{AB}(\phi) &= \phi + \dfrac{5}{2} &c_{AC}(\phi) &= \dfrac{\phi}{100} + 1  &c_{AD}(\phi) &= \dfrac{\phi}{3} + 1 &c_{CD}(\phi) &= \dfrac{\phi}{100} \\ 
c_{CE}(\phi) &= \dfrac{\phi}{10} + \dfrac{2}{5}  &c_{DB}(\phi) &= \phi + 1 &c_{DG}(\phi) &= \dfrac{\phi}{100} + \dfrac{3}{20}  &c_{EB}(\phi)& = \dfrac{6\phi}{5} + 1\\
c_{EG}(\phi) &= \dfrac{\phi}{100} &c_{FC}(\phi) &= 5\phi + 2 &c_{FE}(\phi) &= 3\phi + \dfrac{7}{2} &c_{FG}(\phi) &= 5\phi + 3
\end{align*}
The exact equilibrium computed for this example is $((0.161, 0.063, 0.578, 0.198), (0.273, 0.287, 0, 0.44))$ and corresponding costs on each path are
$$
AB = ADB = ACDB = ACEB= 2.66,
FG=FEG= FCDG=4.36, FCEG=4.63
$$
\begin{figure}
  \centering
    \includegraphics[scale=0.32]{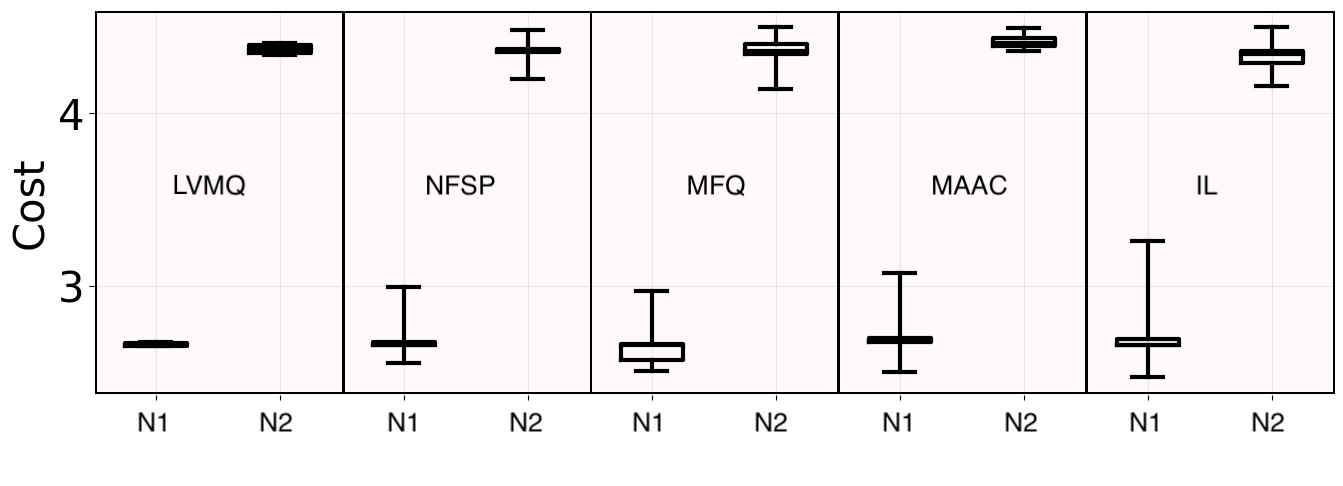}
  \caption{Variance in individual costs for example 2}
  \label{routing2}
\end{figure}
Figures \ref{routing1} and \ref{routing2} show variance in individual agents for examples 1 and 2 respectively. On the $x-$axis, $N_i$ means the variance is for population type $i$ for the corresponding algorithm. Table \ref{eq-policy-sup} contain the \textit{epsilon} values (the maximum reduction in the cost of an agent when it changes its policy unilaterally) for both the examples. 

We observe that all the algorithms learn near equilibrium policy for example 1 (specifically the policy of population $\mc{N}_1$) and \textit{epsilon} values are also low. However, LVMQ performs better than other algorithms for the second example. The variance is minimum for LVMQ and the \textit{epsilon} value is also very low for it. Hence, LVMQ is able to learn policies which are close to equilibrium when complexity of learning is high.

\section{Limitation}
One limitation of LVMQ is its inability to distinguish between good equilibrium policy (high value) and bad equilibrium policy (low value) when multiple equilibria exists. Hence, LVMQ might converge towards the low value equilibrium policy if there are multiple equilibria. 

From aggregation domain perspective, another limitation is that the work does not consider dynamic population of agents. The assumption is that the mass of agents remains $1$ trough out the learning episode ($T$ time steps). However, in reality agents (taxi drivers) keep leaving/joining the environment and this may affect the learning performance if there is a significant rise/drop in overall mass of agents. We also do not consider heterogeneous agents and assume that utilities of all the agents are uniform. In reality, utility of same trip (though immediate reward is same) might be different for different agents.

Above mentioned limitations are separate research topics and we will explore these directions in our future work.

\end{document}